\newtheorem{theorem}{Theorem}[section]
\newtheorem{corollary}[theorem]{Corollary}
\newtheorem{proposition}[theorem]{Proposition}
\newtheorem{lemma}[theorem]{Lemma}
\newtheorem{definition}[theorem]{Definition}
\newenvironment{proof}[1][Proof]{\begin{trivlist}
\item[\hskip \labelsep {\bfseries #1}]}{\end{trivlist}}
\newenvironment{remark}[1][Remark]{\begin{trivlist}
\item[\hskip \labelsep {\bfseries #1}]}{\end{trivlist}}
\newcommand{\qed}{\nobreak \ifvmode \relax \else
	\ifdim\lastskip<1.5em \hskip- \lastskip
	\hskip1.5em plus0em minus0.5em \fi \nobreak
	\vrule height0.75em width0.5em depth0.25em\fi}
\begin{document}

\title{Gradient conformal stationarity and the CMC condition in LRS spacetimes}

\author{Gareth \surname{Amery}}
\email{ameryg1@ukzn.ac.za}
\affiliation{Astrophysics Research Center, School of Mathematics, Statistics and Computer Science, University of KwaZulu-Natal, Private Bag X54001, Durban 4000, South Africa}

\author{Peter K S \surname{Dunsby}}
\email{peter.dunsby@uct.ac.za}
\affiliation{Cosmology and Gravity Group, Department of Mathematics and Applied Mathematics, University of Cape Town, Rondebosch 7701, Cape Town, South Africa}
\affiliation{South African Astronomical Observatory, Observatory 7925, Cape Town, South Africa}
\affiliation{Center for Space Research, North-West University, Mahikeng 2745, South Africa}

\author{Abbas M \surname{Sherif}}
\email{abbasmsherif25@gmail.com}
\affiliation{Department of Science Education, Jeju National University, Jeju, 63243, South Korea}

\begin{abstract}
We study the existence of gradient conformal Killing vectors (CKVs) in the class of locally rotationally symmetric (LRS) spacetimes which generalizes spherically symmetric spacetimes, and investigate some implications for the evolutionary character of marginally outer trapped surfaces. We first study existence of gradient CKVs via the obtention of a relationship between the Ricci curvature and the gradient of the divergence of the CKV. This provides an alternative set of equations, for which the integrability condition is obtained, to analyze the existence of gradient CKVs. A uniqueness result is obtained in the case of perfect fluids, where it is demonstrated that the Robertson-Walker solution is the unique perfect fluid solution with a nonvanishing pressure, admitting a timelike gradient CKV. The constant mean curvature condition for LRS spacetimes is also obtained, characterized by three distinct conditions which are specified by a set of three scalars. Linear combinations of these scalars, whose vanishing define the constant mean curvature condition, turn out to be related to the evolutions of null expansions of 2-spheres along their null normal directions. As such, some implications for the existence of black holes and the character of the associated horizons are obtained. It is further shown that dynamical black holes of increasing area, with a non-vanishing heat flux across the horizon, will be in equilibrium, with respect to the frame of the conformal observers.
\end{abstract}

\maketitle

%\tableofcontents

\section{Introduction}

%%%%%%%%%%%%%%%%%%%%%%%%%%%%%%%%%%%%%%%%%%%

Understanding the role of symmetries in general relativity dates back to the 1920's when Brinkmann examined the usefulness of such transformations to obtain new exact solutions to the field equations \cite{br1,br2}. Since then a wealth of literature has been gathered on this subject (see the references \cite{ehl1,herr1,herr2,herr3,mart1,herr4,aco1,aco2,aco3,mart2} and associated references). Killing and conformal Killing (infinitesimal) symmetries, the action for which the metric is left either unchanged, or simply scaled, are usually assumed. For the conformal Killing case, if the vector field generating the symmetry is timelike, one has a choice of observers, and in which case it is known that the spacetime admitting the symmetry is conformal to a stationary one, i.e. it scales a stationary spacetime by some factor. These spacetimes are termed \textit{conformally stationary} (or CS for short) spacetimes \cite{alias1,rub1}.

In the case that the timelike conformal Killing vector field is gradient, these spacetimes are called gradient CS (GCS) spacetimes. In order for such spacetimes to be stably causal, it is required that the gradient condition on the vector field is a global one. There have been several studies addressing various properties of the spacelike hypersurfaces of the foliation induced by the splitting along the CKV, related to their stability and uniqueness, mostly when the hypersurfaces are maximal (see for example \cite{rub1,rub2} and associated references). A particular subclass of GCS spacetimes is the set of generalized Robertson-Walker (GRW) spacetimes, a warped product spacetime that has been extensively studied for various reasons. (See the following references: \cite{rub3,rub4,rub5,man1}. For an in depth review in the case of GRW spacetimes, the reader is referred to \cite{man2}.)

The relationship between symmetries of spacetimes and trapped and marginally outer trapped surfaces (MOTS) have also been investigated previously in \cite{mm1,ash2}. These studies have primarily focused on the Killing case, where the existence of trapped surfaces, MOTS and certain horizon types were investigated under the Killing symmetry assumption, considering different causal characters of the Killing vector. (In the reference \cite{mm1}, the case of a proper conformal Killing symmetry was briefly commented on, where the averaging over the MOTS, the relationship between the conformal divergence $\varphi$ and the inner product of the conformal Killing vector and the null mean curvature vector of the MOTS was examined.)  More recently, the stability of MOTS contained in an initial data, which admits a Killing symmetry, has been investigated in \cite{ib5}. (Previous studies along such lines, albeit differently approached, were carried out in \cite{mm2}.) Several results demonstrating the instability of exotic MOTS (those with nontrivial topologies), some of which were found in \cite{ib8}, for example, were obtained.

Locally rotationally symmetric (LRS) spacetimes are a class of spacetimes with relevance to astrophysics and cosmology. These spacetimes, which are naturally threaded along observer's four-velocity, admit a preferred spatial direction which is orthogonal to the four-velocity \cite{el1,el2}. These include fluid spacetimes like rotationally symmetric perfect fluids. The subclass, named the LRS II class, for which the two preferred directions have vanishing vorticities, generalizes solutions with spherical symmetry, and is understood to contain the exact spherically symmetric black hole solutions. Locally, the metric of these spacetimes take the form

\begin{eqnarray}\label{fork}
ds^2=-A^2dt^2+ B^2dr^2+ C^2\left(dy^2+D^2dz^2\right)\;,
\end{eqnarray}
where \(A,B\) and \(C\) are \(t\) and \(r\) dependent. The function \(D\), which is a function of \(y\), is parametrized by a number \(k\in\{-1,0,1\}\): \(k=-1\) corresponds to \(\sinh y\), \(k=0\) corresponds to \(y\), \(k=1\) corresponds to \(\sin y\)).

The more general form of the local metric acquires additional terms. We will not need this form here since subsequent analysis will not require the explicit form of the metric, except to mention that in the general case, the vorticities of the preferred directions are generally non-zero. These non-zero vorticities will sparsely appear in the rest of the work, and so will be introduced in the next section.

Conformal symmetries in these spacetimes have been studied variously under several assumptions, and adaptation to the 1+1+2 covariant formalism (see \cite{cc1,cc2}) employed in this work can be found in these recent works \cite{bergh1,singh1,chev1,chev2,as5}. The formalism is quite natural for LRS spacetimes as all vector and tensor quantities can be projected along the two preferred unit directions. Existence of gradient CKV in vacuum and perfect fluid spacetimes was first examined by Daftardar and Dadhich in \cite{nd1}. In the recent work by Koh \textit{et al.}, \cite{as5}, this was generalized to all of LRS spacetimes, with the existence in the LRS II case characterized by a wave-like PDE. The criteria for the existence of gradient CKVs were also discussed. Gradient CKVs have also found a role in generating hidden symmetries in a spacetime. Specifically, these vector fields can be used to generate conformal Killing tensors and Killing tensors. (For example, see \cite{amery1} where the Koutras algorithm, introduced in \cite{koutras1}, was used by Amery and Maharaj to establish the form of the Killing tensor in an Einstein space. Also, more general considerations have been carried out by Rani \textit{et al.}, \cite{rani1}, to generate conformal Killing tensors of the gradient type.)

Our primary aim of this work is to investigate the conditions for an LRS spacetime to be GCS, and the implications for the existence of trapped and MOTS in the spacetime, and how they consequently evolve. The first part of this work will supplement that of \cite{as5}. We, however, approach this problem quite differently. A curvature condition will be obtained which relates to the behavior of the gradient of the divergence of the CKV. The condition provides a set of equations for the potential function for the vector field which we are to analyze. While the use of the conformal Killing equations is purely kinematic, this approach to be employed here clarifies the role of the matter variables (and hence relationship to the energy conditions) in relations to the existence of gradient CKVs.

Furthermore, the presence of a timelike gradient CKV orthogonally splits the timelike region of the spacetime into a stack of spacelike hypersurfaces, on each of which the trace of the extrinsic curvature is constant. Hence, finding the precise condition(s) for this constancy provides necessary condition for the (non)-existence of gradient CKV. This essentially forms the second part of this work. We obtain the constancy condition(s), and from its relationship to the above mentioned curvature condition, we draw some concluding statements about existence. We then use the constant mean curvature condition to study existence/presence of trapped surfaces and MOTS in the spacetime, the evolution of the surface in the case of a MOTS, and how a black hole is observed in the conformal frame. These results, in  some regards, are related to some of the results obtained in \cite{mm1,ash2} for the Killing case.

We organize this paper as follows. In Section \ref{sec2}, we give a brief overview of the 1+1+2 spacetime decomposition, following the standard literature which will be referenced. Section \ref{sec3} introduces the notion of a gradient conformally stationary spacetime. An identity relating the curvature to the divergence of the CKV is obtained, which provides an alternative set of equations, along with the integrability condition, to analyze the existence and character of a gradient CKV. A uniqueness result is also obtained in the case of perfect fluid solutions. In Section \ref{sec4}, we study the constant mean curvature (CMC) condition for the associated spacelike hypersurfaces. This is realized as algebraic constraints on the covariant matter variables, and also implies necessary conditions for the existence of a timelike gradient conformal Killing vector field. Some results for the implications for the existence of black holes, and the character of asociated horizons, are obtained in Section \ref{sec5}. A summary of results is presented in Section \ref{sec6} with potential future directions suggested.

%%%%%%%%%%%%%%%%%%%%%%%%%%%%%%%%

\section{A covariant spacetime decomposition and LRS solutions}\label{sec2}

%%%%%%%%%%%%%%%%%%%%%%%%%%%%%%%%

We briefly introduce the 1+1+2 covariant decomposition approach to be used in our analysis, and specialize to LRS spacetimes only, where 2-tensors and 2-vectors on the 2-space are identically zero. (Discussions of the more general decomposition entailing all of the covariant tensors and vectors can be followed in the references \cite{cc1,cc2}.) These spacetimes are algebraically classed as Petrov type D spacetimes \cite{ellis5}.

In addition to the unit timelike field \(u^a\) which threads a 4-dimensional spacetime as a 1+3 product manifold, the 1+1+2 approach introduces a unit spacelike vector field, denoted \(n^a\), orthogonal to \(u^a\) which decomposes the Riemannian 3-manifold as a 1+2 product manifold. In addition to the `dot' derivative along the timelike \(u^a\) and the metric on the 3-manifold \(h_{ab}=g_{ab}+u_au_b\) induced by the 1+3 splitting, the additional decomposition of the 3-manifold introduces a derivative along \(n^a\), as well as the induced metric on the 2-space, \(N_{ab}=g_{ab}+u_au_b-n_an_b\):

\begin{align*}
\dot{\psi}_{\ \ \ \ a\cdots b}^{f\cdots g}&=u^c\nabla_c\psi_{\ \ \ \ a\cdots b}^{f\cdots g},\\
\hat{\psi}_{\ \ \ \ a\cdots b}^{f\cdots g}&=n^c\nabla_c\psi_{\ \ \ \ a\cdots b}^{f\cdots g},
\end{align*}
for an arbitrary tensor \(\psi_{\ \ \ \ a\cdots b}^{f\cdots g}\), where \(\nabla_a\) is the spacetime covariant derivative. Gradient of a scalar \(\psi\) decomposes as

\begin{align*}
\nabla_a\psi=-\dot{\psi}u_a+\hat{\psi} n_a+\delta_a\psi,
\end{align*}
where \(\delta_a\psi=N^b_{\ a}\nabla_b\psi\) is the 2-surface component of the spacetime gradient of \(\psi\). As we are specializing to LRS spacetimes, the symmetry of these spacetimes imposes the vanishing of the surface term $\delta_a\psi$ of the full covariant derivative of $\psi$.

The energy momentum tensor, under this decomposition, admits the splitting

\begin{align*}
T_{ab}=\rho u_au_b+2Qu_{(a}n_{b)}+ph_{ab}+\Pi\left(n_an_b-\frac{1}{2}N_{ab}\right),
\end{align*}
where \(\rho=T_{ab}u^au^b\) is the energy density, \(3p=T_{ab}h^{ab}\) is the (isotropic) pressure, \(Q=T_{ab}n^au^b\) is the scalar associated to the heat flux vector $q_a=-h^{\ c}_aT_{cd}u^d$, and \(\Pi=T_{ab}n^an^b-p\) encodes the deviation from isotropy. For perfect fluid LRS spacetimes, a subclass that will play a central role in the rest of the paper, one has the vanishing of the flux and anisotrophy variables $Q$ and $\Pi$. Using the field equations, the Ricci tensor and the scalar curvature of the spacetime take the forms

\begin{align}
R_{ab}&=g_1u_au_b+g_2h_{ab}+2Qu_{(a}n_{b)}+\Pi\left(n_an_b-\frac{1}{2}N_{ab}\right),\label{ck12}\\
R&=g_1+3g_2=\rho-3p+4\Lambda,\label{ckt2}
\end{align}
where we have set

\begin{align*}
g_1=\frac{1}{2}\left(\rho+3p-2\Lambda\right);\quad g_2=\frac{1}{2}\left(\rho-p+2\Lambda\right),
\end{align*}
with the quantity $\Lambda$ being the cosmological constant.

Of fundamental importance to almost all calculations in the current formulation is the full covariant derivatives of the unit vector fields \(u^a\) and \(n^a\):

\begin{align}
\nabla_au_b&=-\mathcal{A}u_an_b+\left(\frac{1}{3}\theta+\sigma\right)n_an_b+\frac{1}{2}\left(\frac{2}{3}\theta-\sigma\right)N_{ab}+\Omega\varepsilon_{ab},\label{cd1}\\
\nabla_an_b&=-\mathcal{A}u_au_b+\left(\frac{1}{3}\theta+\sigma\right)n_au_b+\frac{1}{2}\phi N_{ab}+\xi\varepsilon_{ab}.\label{cd2}
\end{align}

In the above equations, \(\mathcal{A}=n_a\dot{u}^a\) is the acceleration of $u^a$, \(\theta=h^{ab}\nabla_au_b=\nabla_au^a\) is the expansion of \(u^a\), \(\sigma=\sigma_{ab}n^an^b\) is the shear scalar (where \(3\sigma_{ab}=3\nabla_{\langle a}u_{b\rangle}=(3h^{\ c}_{(a}h^{\ d}_{b)}-h_{ab}h^{cd})\nabla_cu_d\) is the shear of \(u^a\), with the angle bracket denoting fully projected and trace-free symmetric part), \(\phi=\delta_an^a\) is the sheet expansion, \(\Omega=\omega_an^a\) is the component of the vorticity of \(u^a\) along $n^a$ where $\omega_a=(1/2)\eta_{abcd}u^b\nabla^{[c}u^{d]}$ ($\eta_{abcd}$ is the 4-dimensional alternating tensor and the square brackets denote antisymmetrization), \(\xi=(1/2)\varepsilon^{ab}\delta_an_b\) is the twist of \(n^a\), and \(\varepsilon_{ab}\) is the 2-dimensional alternating tensor.

In general, the dot and hat derivatives do not commute, but obey the following relation \cite{cc1,cc2}:

\begin{eqnarray}
\hat{\dot{\psi}}-\dot{\hat{\psi}}=-\mathcal{A}\dot{\psi}+\left(\frac{1}{3}\theta+\sigma\right)\hat{\psi},\label{comrel}
\end{eqnarray}
for an arbitrary scalar $\psi$. The above commutation relation is useful in obtaining additional constraints and performing consistency checks.

Finally, the field equations can be expressed as first order evolution and propagation equations of the covariant variables (see the appendix).

%%%%%%%%%%%%%%%%%%%%%%%%%%%%%%%%%

\section{On gradient CKV in LRS spacetimes}\label{sec3}

%%%%%%%%%%%%%%%%%%%%%%%%%%%%%%%%%

In this section we introduce notation and some standard definitions that are of interest to this work. We begin with the definition of conformal symmetries and the gradient case. Existence of gradient CKVs in fluid spacetimes have been discussed \cite{nd1}, and more recently, their existence was studied in the LRS class of spacetimes \cite{as5}. Here, we will consider the properties of gradient CKVs in LRS spacetime more geometrically, purely from the character of the Ricci tensor of the spacetime. Some new results will be provided. Relationship of the energy conditions to the character of the gradient CKV will be discussed and when the CKV is a Ricci principal direction, i.e. when the CKV is an eigenvector of the Ricci tensor, the form and character of the associated eigenvalue is analyzed.

\subsection{Definitions and notations}

A spacetime \(\mathcal{M}\), with metric \(g_{ab}\), admits a conformal symmetry if there exists a vector field \(x^a\) such that

\begin{eqnarray}\label{ck1}
\mathcal{L}_xg_{ab}=\nabla_ax_b+\nabla_bx_a=2\varphi g_{ab},
\end{eqnarray}
where \(\mathcal{L}_x\) is the Lie derivative along \(x^a\), and \(\varphi\) is some smooth function on \(\mathcal{M}\) defined as \(4\varphi=\nabla_ax^a\). (From now on we shall refer to \(\varphi\) as the \textit{conformal divergence} as it captures the divergence of conformal observers along orbits of \(x^a\), when such observers can be defined.) The vector field \(x^a\) is called a Killing vector (KV) if \(\varphi=0\), a homothetic Killing vector (HKV) if \(\varphi\) is constant, and a proper conformal Killing vector (CKV) if \(\varphi\) is non-constant.

We adopt the standard nomenclature for the equations \eqref{ck1}, \textit{conformal Killing equations} (CKE). (In some literature it is sometimes referred to as the `\textit{unphysical Killing equations}').

\begin{definition}
A spacetime \(\mathcal{M}\) is called a \textbf{conformally stationary} (CS) spacetime if it admits a timelike CKV, globally defined. In the case the CKV is globally a gradient, \(\mathcal{M}\) is referred to as a \textbf{gradient conformally stationary} (GCS) spacetime \cite{alias1,rub1}.
\end{definition}

\noindent\textit{Notation fixing:} Going forward, we will simply refer to the function whose gradient is the CKV \(x^a\) as the potential function of \(x^a\), which will be denoted \(\Psi\). The gradient of an arbitrary scalar \(\psi\) will be denoted \(\psi_a=\nabla_a\psi\), and the contraction over an index along the CKV will be notated with a subscript/superscript `\(o\)': \(\psi_{abc}x^c=\psi_{abo}\).

\ \\
\noindent\textit{Energy conditions abbreviations:} Weak energy condition (WEC); Null energy condition (NEC); Strong energy condition (SEC).

\subsection{Some initial results: Gradient CKV, energy conditions and eigenvalue}

The CKV type we consider obeys the LRS symmetry (components are functions of \(t\) and \(r\) coordinates) and takes the form

\begin{eqnarray}\label{ck2}
x^a=\alpha_1 u^a+\alpha_2 n^a,
\end{eqnarray}
where the \(\alpha_i\) are smooth. With this form of the vector field, the CKE \eqref{ck1} can be written down as a set of three covariant first order PDEs in the components \(\alpha_i\), plus a constraint equation:

\begin{align}
\varphi&=\dot{\alpha}_1+\alpha_2\mathcal{A},\label{cmc1}\\
\varphi&=\hat{\alpha}_2+\alpha_1\left(\frac{1}{3}\theta+\sigma\right),\label{cmc2}\\
0&=\dot{\alpha}_2-\hat{\alpha}_1+\alpha_1\mathcal{A}-\alpha_2\left(\frac{1}{3}\theta+\sigma\right),\label{cmc3}\\
2\varphi&=\alpha_1\left(\frac{2}{3}\theta-\sigma\right)+\alpha_2\phi.\label{cmc4}
\end{align}

Additionally, since \(x^a\) is gradient, the bivector

\begin{align}
F_{ab}=\nabla_{[a}x_{b]}=2\left(\dot{\alpha}_2+\alpha_1\mathcal{A}\right)u_{[a}n_{b]}+\left(\alpha_1\Omega+\alpha_2\xi\right)\varepsilon_{ab},\label{biv}
\end{align}
must vanish, and this occurs if and only if, simultaneously,

\begin{align}\label{biv}
\dot{\alpha}_2+\alpha_1\mathcal{A}&=0;\\
\alpha_1\Omega+\alpha_2\xi&=0.
\end{align}

Noting that $x^a=\nabla^a\Psi$, it follows

\begin{eqnarray}
\alpha_1=-\dot\Psi;\quad \alpha_2=\hat{\Psi}.
\end{eqnarray}

In \cite{as5}, it was found that the only solutions admitting a timelike gradient CKV are either the irrotational and twisting ones, i.e. $\Omega=0,\xi\neq0$, or the spacetime must be of the LRS II type, i.e. $\Omega=\xi=0$. (The two other subclasses are LRS I with $\Omega\neq0,\xi=0$ and the LRS III which is the former type with $\Omega=0,\xi\neq0$.) In the latter case, the necessary and sufficient condition for the spacetime to admit a gradient CKV was found. As will later be seen in Section \ref{sec3}, the restriction on the classes of LRS solutions admitting a gradient CKV imposes a certain obstruction to the CMC condition on the hypersurfaces of consideration.

Since an LRS spacetime admitting a timelike gradient CKV is necessarily irrotational, we set $\Omega=0$ in all subsequent expressions.

Let us immediately state the following Lemmas that will be of utility to the rest of this work.

\begin{lemma}\label{leme1}
If a spacetime \(\mathcal{M}\) admits a CKV \(x^a\) with conformal divergence \(\varphi\), then the following hold:

\begin{align}
\Box x_a&=-R_{ao}-2\varphi_a,\label{ck4}\\
\Box\varphi&=-\frac{1}{6}R_o-\frac{1}{3}R\varphi,\label{ck5}
\end{align}
where \(R\) denotes the scalar curvature and $R_o=x^a\nabla_aR$ is the gradient of the scalar curvature along the CKV, with the `box' operator $\Box$ denoting the usual d'Alambertian.
\end{lemma}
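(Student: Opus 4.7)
The plan is to derive both identities by combining the CKE \eqref{ck1} with the Ricci identity (commutator of covariant derivatives) and the contracted Bianchi identity $\nabla^a R_{ab}=\tfrac{1}{2}\nabla_b R$.

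For identity \eqref{ck4}, I would first apply $\nabla^b$ to the CKE $\nabla_a x_b + \nabla_b x_a = 2\varphi g_{ab}$, obtaining
\begin{align*}
\Box x_a + \nabla^b\nabla_a x_b = 2\varphi_a.
\end{align*}
Next I would rewrite the second term by commuting covariant derivatives: $\nabla^b\nabla_a x_b = \nabla_a(\nabla^b x_b) + [\nabla^b,\nabla_a]x_b$. The first piece is $\nabla_a(4\varphi)=4\varphi_a$ from the definition of $\varphi$, and the second piece reduces, via the Ricci identity and the standard contraction defining $R_{ab}$, to $R_{ab}x^b = R_{ao}$. Substituting back and solving for $\Box x_a$ yields \eqref{ck4}.

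For identity \eqref{ck5}, my strategy is to compute the scalar $\nabla^a\Box x_a$ in two different ways and equate. On one hand, taking $\nabla^a$ of \eqref{ck4} directly gives
\begin{align*}
\nabla^a\Box x_a = -(\nabla^a R_{ab})x^b - R_{ab}\nabla^a x^b - 2\Box\varphi = -\tfrac{1}{2}R_o - R\varphi - 2\Box\varphi,
\end{align*}
where I use the contracted Bianchi identity for the first term, and for the second I split $\nabla^a x^b = \varphi g^{ab} + F^{ab}$ (with $F^{ab}$ antisymmetric, coming from the CKE) so that only the trace piece survives contraction with the symmetric $R_{ab}$. On the other hand, the CKE itself gives $\nabla_a x_b = \varphi g_{ab} + F_{ab}$, hence
\begin{align*}
\Box x_b = \varphi_b + \nabla^a F_{ab}, \qquad \nabla^b\Box x_b = \Box\varphi + \nabla^b\nabla^a F_{ab}.
\end{align*}
Equating the two expressions for $\nabla^a\Box x_a$ and isolating $\Box\varphi$ gives the desired result, provided $\nabla^b\nabla^a F_{ab}$ vanishes.

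The main obstacle, and in fact the only nontrivial step, is establishing $\nabla^b\nabla^a F_{ab}=0$. By antisymmetry of $F_{ab}$, relabeling dummy indices shows $\nabla^b\nabla^a F_{ab} = \tfrac{1}{2}[\nabla^b,\nabla^a]F_{ab}$, reducing the task to a Riemann-tensor contraction. Applying the Ricci identity on each index of $F_{ab}$ and simplifying via the first Bianchi identity reduces the resulting double-Riemann contraction to terms of the form $R_{ab}F^{ab}$, i.e.\ a symmetric tensor contracted against an antisymmetric one, which vanishes identically. With this in hand, the comparison yields $3\Box\varphi = -\tfrac{1}{2}R_o - R\varphi$, which is precisely \eqref{ck5}.
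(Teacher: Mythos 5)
Your proof is correct and rests on essentially the same ingredients as the paper's: the Ricci identity applied to the conformal Killing equation together with the contracted Bianchi identity $\nabla^a R_{ab}=\tfrac{1}{2}\nabla_b R$. The only organizational difference is that the paper first establishes the full second-derivative identity $\nabla_c\nabla_a x_b = R_{baco}+g_{ab}\varphi_c+g_{cb}\varphi_a-g_{ca}\varphi_b$ and contracts it, whereas you work directly with divergences of the CKE and dispose of the extra term via the standard vanishing of the double divergence $\nabla^b\nabla^a F_{ab}$ of an antisymmetric tensor.
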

\begin{proof}
It is a straightforward exercise, using the Ricci (or curvature) identities for \(x^a\), \(\nabla_a\nabla_bx^c-\nabla_b\nabla_ax^c=R^c_{\ oab}\), and the CKE, to establish the following relation:

\begin{eqnarray}\label{ck3}
\nabla_c \nabla_ax_b = R_{baco} + g_{ab}\varphi_c+g_{cb}\varphi_a-g_{ca}\varphi_b\;.
\end{eqnarray}
This is the CKV analogue of the KV lemma \(\nabla_c\nabla_ax_b=R_{baco}\). Then, \eqref{ck4} follows from \eqref{ck3}, and \eqref{ck5} follows from using the definition of the conformal divergence and making use of the Ricci identities for the CKV.\qed
\end{proof}
\begin{lemma}\label{leme2}
If a spacetime \(\mathcal{M}\) admits a gradient CKV \(x^a\), then, it holds that

\begin{eqnarray}\label{ck8}
R_{ao}=-3\varphi_a.
\end{eqnarray}
\end{lemma}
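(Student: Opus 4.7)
The plan is to evaluate $\Box x_a$ in two independent ways and equate them. Lemma \ref{leme1} already supplies the first expression, $\Box x_a = -R_{ao} - 2\varphi_a$, which holds for any CKV. The second expression should exploit the gradient hypothesis $x_a = \nabla_a\Psi$ to pull the extra Laplacian onto the scalar potential.

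To carry this out, I would first observe that the definition $4\varphi = \nabla_a x^a$ becomes $\Box \Psi = 4\varphi$ under the gradient assumption. Then, since $\nabla_a\nabla_b\Psi$ is symmetric for a scalar, one has
\begin{eqnarray*}
\Box x_a \;=\; \nabla^b\nabla_b\nabla_a\Psi \;=\; \nabla^b\nabla_a\nabla_b\Psi,
\end{eqnarray*}
and commuting $\nabla^b$ past $\nabla_a$ on the covector $\nabla_b\Psi$ via the Ricci identity (in the same sign convention used in the proof of Lemma \ref{leme1}) reduces to the standard identity $[\Box,\nabla_a]\Psi = R_{ab}\nabla^b\Psi$ after contracting and using the usual Riemann symmetries. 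This yields
\begin{eqnarray*}
\Box x_a \;=\; \nabla_a\!\left(\Box\Psi\right) + R_{ab}\nabla^b\Psi \;=\; 4\varphi_a + R_{ao}.
\end{eqnarray*}
Equating the two evaluations of $\Box x_a$ gives $4\varphi_a + R_{ao} = -R_{ao} - 2\varphi_a$, which immediately rearranges to the claim $R_{ao} = -3\varphi_a$.

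The argument uses no LRS-specific structure, so the only real task is the clean derivation of the commutator identity $[\Box,\nabla_a]\Psi = R_{ab}\nabla^b\Psi$, which I expect to be the only step requiring some care; it is a one-line application of the Ricci identity but the sign must be reconciled with the convention $\nabla_a\nabla_bx^c - \nabla_b\nabla_ax^c = R^c_{\ oab}$ already invoked in proving Lemma \ref{leme1}. Once that is pinned down, the rest is purely algebraic.
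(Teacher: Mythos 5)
Your argument is correct, and it reaches \eqref{ck8} by a genuinely different second evaluation of $\Box x_a$ than the paper's. The paper observes that for a gradient field $\nabla_a x_b$ is symmetric, so the CKE \eqref{ck1} collapses to the pointwise identity $\nabla_a x_b = \varphi g_{ab}$; tracing the derivative then gives $\Box x_a = \varphi_a$ in one line, and inserting this into \eqref{ck4} yields $R_{ao} = -3\varphi_a$ with no curvature commutation at all. You instead leave the CKE aside in the second evaluation and push the d'Alembertian onto the potential, invoking $[\Box,\nabla_a]\Psi = R_{ab}\nabla^b\Psi$ together with $\Box\Psi = 4\varphi$ to get $\Box x_a = 4\varphi_a + R_{ao}$, and then equate with \eqref{ck4}; the arithmetic $4\varphi_a + R_{ao} = -R_{ao} - 2\varphi_a$ indeed gives the claim. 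Your concern about the commutator sign is easily settled: Lemma \ref{leme1} reduces, for $\varphi=0$, to the standard Killing identity $\Box x_a = -R_{ao}$, which fixes the paper's Ricci convention to the one in which the commutator carries the $+R_{ab}\nabla^b\Psi$ sign you quote (and note that the opposite sign would force $6\varphi_a=0$ upon equating, excluding proper gradient CKVs such as the Robertson--Walker one, so this also serves as a quick sanity check). What your route buys is that the conformal Killing equation enters only through Lemma \ref{leme1} and the trace definition $4\varphi=\nabla_ax^a$, so the gradient hypothesis is exploited purely at the level of the scalar potential; what it costs is an extra Ricci identity and its sign bookkeeping, which the paper's use of $\nabla_a x_b = \varphi g_{ab}$ avoids entirely.
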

\begin{proof}
Suppose \(\mathcal{M}\) admits a gradient CKV \(x^a\). Then, of course,

\begin{eqnarray}\label{ck6}
\nabla_ax_b=\varphi g_{ab}.
\end{eqnarray}
Thus,

\begin{eqnarray}\label{ck7}
\Box x_a=\varphi_a,
\end{eqnarray}
which upon inserting into \eqref{ck4} gives \eqref{ck8}.\qed
\end{proof}

This Lemma \ref{leme1} appears in the literature elsewhere (see for example \cite{feng1} where in the definition \eqref{ck1} the author has instead used the scaling $2\varphi\rightarrow\varphi$). The Lemma \ref{leme2}, however, is new (as far as we are aware) and while apparently obvious, clearly has geometric consequences, by informing us how the spacetime Ricci curvature is related to the behavior of the conformal observers. This has consequences for the character of the CKV as well as the geometry of the ambient spacetime and the hypersurfaces of interest. The particular relevance with regards to the character and existence of the CKV will be laid bare in the sequel.

\subsubsection{Some quick comments on the relation of Lemma \ref{leme2}}

Immediately, by Lemma \ref{leme2} we see that the conformal divergence relates to the potential function $\Psi$ via the following set of first order PDEs:

\begin{align}
g_1\dot{\Psi}+Q\hat{\Psi}&=3\dot{\varphi},\label{cdi1}\\
-\left[Q\dot{\Psi}+\left(g_2+\Pi\right)\hat{\Psi}\right]&=3\hat{\varphi},\label{cdi2}
\end{align}
where $g_1$ and $g_2$ are the matter functions appearing in the expression for the Ricci tensor \eqref{ckt2}. If the above pair of equations does not admit a solution $\Psi$ on the space of functions in a LRS spacetime, the spacetime cannot admit a gradient CKV. On the other hand, it is seen that given a CKV with conformal divergence $\varphi$, the existence of a solution \(\Psi\) to the above implies the CKV is gradient.

We make some immediate observations from the pair of equations \eqref{cdi1} and \eqref{cdi2} and obtain some results.

Let us consider cases where the following conditions hold:

\begin{eqnarray}
g_1\neq0;\quad g_1\left(g_2+\Pi\right)\neq Q^2,\label{imp1}
\end{eqnarray}
so that

\begin{eqnarray}
\hat{\Psi}=-3\left(\frac{Q\dot{\varphi}+g_1\hat{\varphi}}{g_1\left(g_2+\Pi\right)-Q^2}\right).\label{cdi5}
\end{eqnarray}
Then, the following statements are in order.

\begin{enumerate}

\item Any gradient CKV $x^a$ must be proper for otherwise, $x^a$ is trivial: If $\varphi$ is constant, then $\hat{\Psi}=0$ implies $\dot{\Psi}=0$ from \eqref{cdi1} since $g_1\neq0$.

\item For $\varphi=\varphi(r)\neq0$ (note that in this case $\hat{\Psi}\neq0$ since $g_1\neq0$), then, in the case of vanishing heat flux, $x^a$ must align with $n^a$.  Indeed, this particular case is not of interest to this work since $x^a$ is spacelike.

\item If $\varphi=\varphi(t)$, then a perfect fluid matter type (assuming a vanishing cosmological constant) with $\rho\neq p$ would necessitate that $x^a$ aligns with $u^a$. In this case, one can in fact write down explicitly the gradient CKV for the spacetime as

\begin{eqnarray}
x^a=-3\left(\frac{\dot{\varphi}}{g_1}\right)u^a.
\end{eqnarray}
A well known example is the gradient CKV of the Robertson-Walker metric. In fact, without assuming a $t$-only dependence of $\varphi$, this result will be true under the condition of positive energy density.

\end{enumerate}

Suppose $x^a$ is a Ricci principal direction (this will always be the case when the timelike criterion is imposed on $x^a$, as will be discussed later), and denote by $\eta$ the eigenvalue associated to $x^a$, i.e.

\begin{eqnarray*}
R_{ab}x^a=\eta x_b.\label{eigv}
\end{eqnarray*}
We analyze this as the two distinguished cases $Q=0$ and $Q\neq0$.

Let us begin with the former, $Q=0$. Firstly, the right hand sides of \eqref{cdi1} and \eqref{cdi2} are $-\eta\dot{\Psi}$ and $-\eta\hat{\Psi}$, respectively. Thus, for $Q=0$, the pair can be reduced to the following system:

\begin{align}
\left(g_1+\eta\right)\dot{\Psi}&=0,\label{eig1}\\
\left(g_2+\Pi-\eta\right)\hat{\Psi}&=0.\label{eig2}
\end{align}
With interest in timelike gradient CKV, $\dot{\Psi}\neq0$, and hence,

\begin{eqnarray}
g_1+\eta=0,\label{eig3}
\end{eqnarray}
thereby giving the associated eigenvalue as

\begin{eqnarray}
\eta=-g_1.\label{eig4}
\end{eqnarray}
The above gives a criterion for the Ricci tensor of a GCS LRS spacetime with vanishing heat flux to have a zero eigenvalue: the WEC marginally holds, i.e. $g_1=0$. On the other hand, we can make the following statement: \textit{The Ricci tensor of a non-radiating GCS LRS spacetime strictly obeying the SEC has at least one negative eigenvalue.}

We will now show that for a positive energy density solution (which is assumed throughout) with a nonvanishing $p$, if $\Pi=0$, then $x^a$ must lie along $u^a$. More precisely,

\begin{proposition}\label{propoor1}
Any gradient CKV of a GCS perfect fluid LRS spacetime with nonvanishing pressure, a positive energy density, and a non-negative cosmological constant, and obeying the SEC, must lie along $u^a$.
\end{proposition}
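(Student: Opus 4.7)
The plan is to combine the perfect-fluid specialization of \eqref{cdi1}--\eqref{cdi2} with the eigenvalue analysis developed just above the proposition, and then to use the SEC together with $\rho>0$ and $\Lambda\ge 0$ to rule out any nonzero spatial component of $x^a$.

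First I would set $Q=0$ and $\Pi=0$ in \eqref{cdi1}--\eqref{cdi2}, collapsing them to $g_1\dot\Psi = 3\dot\varphi$ and $g_2\hat\Psi = -3\hat\varphi$. Treating $x^a$ as a Ricci principal direction (which the paragraph preceding the proposition notes is automatic in the timelike case, with the justification deferred), the eigenvalue equations \eqref{eig1}--\eqref{eig2} become available: since $x^a$ is timelike we have $\dot\Psi\ne 0$, so \eqref{eig1} pins the eigenvalue to $\eta = -g_1$. Arguing by contradiction, suppose $\hat\Psi\ne 0$; then \eqref{eig2} (with $\Pi=0$) forces also $\eta = g_2$, and combining these gives $g_1+g_2=0$. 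Unpacking via $g_1 = \tfrac{1}{2}(\rho+3p-2\Lambda)$ and $g_2 = \tfrac{1}{2}(\rho-p+2\Lambda)$, this is precisely $\rho+p = 0$.

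The concluding step is an algebraic clash with the hypotheses. From $\rho+p=0$ one has $p=-\rho$, so $\rho>0$ forces $p<0$ (automatically nonzero, in harmony with the stated assumption $p\ne 0$). Substituting $p=-\rho$ back into $g_1$ yields $g_1 = -\rho-\Lambda$, and since $\rho>0$ and $\Lambda\ge 0$ this is strictly negative, violating the SEC requirement $g_1\ge 0$. This contradicts our assumption $\hat\Psi\ne 0$, so $\alpha_2 = \hat\Psi = 0$, and $x^a = -\dot\Psi\,u^a$ is collinear with $u^a$.

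The only delicate step in the plan is the appeal to $x^a$ being a Ricci principal direction; the paper defers the justification, so I would simply cite it here rather than re-derive it. Everything else reduces to the transparent algebraic contradiction above, driven jointly by SEC, $\rho>0$, and $\Lambda\ge 0$, with the nonvanishing-pressure hypothesis appearing only as a consistency check on the derived $p=-\rho$.
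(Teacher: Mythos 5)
Your proposal is correct and follows essentially the same route as the paper: assume $\hat\Psi\neq 0$, use the eigenvalue relations \eqref{eig1}--\eqref{eig2} with $Q=\Pi=0$ to force $\rho+p=0$, and then observe that $\rho+3p-2\Lambda=-2(\rho+\Lambda)<0$ contradicts the SEC given $\rho>0$ and $\Lambda\ge 0$. The paper likewise defers the justification that $x^a$ is a Ricci principal direction to the later discussion of the CMC condition, so your treatment of that step matches as well.
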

\begin{proof}
Assume $\hat{\Psi}\neq0$.  It follows that $\rho+p=0$ from \eqref{eig2}. It then follows that

\begin{eqnarray*}
\rho+3p-2\Lambda=-2\left(\rho+\Lambda\right)<0,
\end{eqnarray*}
thereby failing the SEC. Hence, $\hat{\Psi}=0$.\qed
\end{proof}

Of course, as $\hat{\Psi}=\alpha_2=0$ identically, $\mathcal{A}$ must vanish by \eqref{biv}. Furthermore, it is easily checked from the CKE that the spacetime is shear-free. Then, from the field equations \eqref{evo1}, \eqref{evo3}, this would impose that the spacetime is necessarily conformally flat, i.e. $\mathcal{E}=0$. It therefore follows that Proposition \ref{propoor1} implies the following result:

\begin{theorem}\label{theo1}
The only GCS perfect fluid LRS II spacetime with a nonvanishing pressure and non-negative cosmological constant is the Robertson-Walker type solution.
\end{theorem}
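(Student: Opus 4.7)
The plan is to build on Proposition \ref{propoor1} together with the 1+1+2 field equations cited in the appendix to force the spacetime to be conformally flat, at which point the perfect fluid LRS II restriction pins the geometry down to the Robertson-Walker family. Under the standing assumptions (nonvanishing pressure, non-negative $\Lambda$, and the positive energy density/SEC implicit in a physically reasonable perfect fluid GCS setting), Proposition \ref{propoor1} immediately gives $\alpha_2 = \hat{\Psi} \equiv 0$, so the gradient CKV is aligned with $u^a$.

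Next, I would extract the kinematic consequences of this alignment. Since $\alpha_2$ vanishes identically, so do its dot and hat derivatives; the gradient (bivector) constraint $\dot{\alpha}_2 + \alpha_1 \mathcal{A} = 0$ combined with $\alpha_1 \neq 0$ (else $x^a$ is trivial) then yields $\mathcal{A} = 0$. Substituting $\alpha_2 = 0$, $\hat{\alpha}_2 = 0$, and $\mathcal{A} = 0$ into the CKE components \eqref{cmc1}, \eqref{cmc2}, and \eqref{cmc4}, one can eliminate $\varphi$ between the two resulting expressions, one involving $\tfrac{1}{3}\theta + \sigma$ and the other $\tfrac{2}{3}\theta - \sigma$, which forces $\sigma = 0$. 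So the $u^a$-congruence is simultaneously non-accelerating and shear-free; the perfect fluid assumption gives $Q = \Pi = 0$; and LRS II gives $\Omega = \xi = 0$.

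With all of $\mathcal{A}$, $\sigma$, $Q$, $\Pi$, $\Omega$, $\xi$ vanishing, I would substitute into the 1+1+2 propagation and evolution equations \eqref{evo1} and \eqref{evo3} from the appendix, which govern the electric Weyl scalar. Nearly every term drops out and the remaining algebraic content reads $\mathcal{E} = 0$, so the Weyl tensor vanishes and the spacetime is conformally flat.

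The final step is to recognize a conformally flat, shear-free, non-accelerating, vorticity-free LRS II perfect fluid as a Robertson-Walker geometry, which delivers the classification. I expect this to be the most delicate part of the argument: conformally flat perfect fluids in general form the Stephani class, and one must use the additional LRS II restrictions (local rotational symmetry about $n^a$, together with the propagation equations for $\phi$ and $\rho$ in the appendix) to rule out the inhomogeneous Stephani branches and recover precisely the three Robertson-Walker metrics indexed by $k \in \{-1,0,1\}$.
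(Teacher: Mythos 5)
Your proposal is correct and follows essentially the same route as the paper's own argument: Proposition \ref{propoor1} gives $\alpha_2=\hat{\Psi}=0$, the vanishing of the conformal bivector then forces $\mathcal{A}=0$, the conformal Killing equations force $\sigma=0$, and substituting into \eqref{evo1} and \eqref{evo3} yields $\mathcal{E}=0$, whence the Robertson--Walker identification. The only difference is that the paper treats the last step as immediate, and indeed the Stephani-branch worry you raise does not arise here, since with $\mathcal{A}=\sigma=\Omega=0$ the inhomogeneous (accelerating) conformally flat perfect fluids are already excluded.
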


As will later be seen, the above result can be strengthened, where we can do away with imposing a condition on the pressure and the existence of timelike gradient CKV in inhomogenous perfect fluids will be ruled out.

\subsubsection{Integrability condition}

We will now derive integrability conditions for the system \eqref{cdi1} and \eqref{cdi2}.

Let us carry out a consistency check for the set of equations \eqref{cdi1} and \eqref{cdi2}. Take the ``hat'' derivative of \eqref{cdi1} and the ``dot'' derivative of \eqref{cdi2} and take the difference of the resulting equations. Then, compare this difference to the equation obtained when the commutation relation \eqref{comrel} is applied to the conformal divergence $\varphi$ and find

\begin{eqnarray}
0=Q\left(\ddot{\Psi}+\hat{\hat{\Psi}}\right)+g_1\hat{\dot{\Psi}}+\left(g_2+\Pi\right)\dot{\hat{\Psi}}+G_1\dot{\Psi}+G_2\hat{\Psi},\label{wws1}
\end{eqnarray}
where we have defined the scalars

\begin{align*}
G_1&=\hat{g}_1+\mathcal{A}g_1+\dot{Q}+\left(\frac{1}{3}\theta+\sigma\right)Q,\\
G_2&=\left(\dot{g}_2+\dot{\Pi}\right)+\left(\frac{1}{3}\theta+\sigma\right)\left(g_2+\Pi\right)+\hat{Q}+\mathcal{A}Q.
\end{align*}
For now we begin quite generally, keeping in both the quantities $\Pi$ and $Q$, for which we can always specialize. Now, it was established in \cite{as5} that the potential function $\Psi$ of the gradient CKV obeys the wave-like PDE 

\begin{eqnarray}
-\ddot{\Psi}+\hat{\hat{\Psi}}+\left(2\sigma-\frac{1}{3}\theta\right)\dot{\Psi}+\left(\mathcal{A}-\phi\right)\hat{\Psi}=0.\label{wws2}
\end{eqnarray}
(The above equation is obtained directly from the covariant conformal Killing equations, using the gradient property of the CKV. See the reference \cite{as5} for the derivation and explanation.) Compactly, the above has the form $\Box\Psi=4\varphi$. Comparing \eqref{wws2} and \eqref{wws1} then gives

\begin{eqnarray}
0=2Q\ddot{\Psi}+\left(\rho+p+\Pi\right)\hat{\dot{\Psi}}+F_1\dot{\Psi}+F_2\hat{\Psi},\label{wws3}
\end{eqnarray}
where we have defined

\begin{align*}
F_1&=\hat{g}_1+\mathcal{A}\left(\rho+p+\Pi\right)+\left[\dot{Q}+\left(\frac{2}{3}\theta-\sigma\right)Q\right],\\
F_2&=\left(g_2+\Pi\right)^{\cdot{\ }}+\left(\hat{Q}+\phi Q\right).
\end{align*}
The equation \eqref{wws3} provides the integrability condition for the set of equations \eqref{cdi1} and \eqref{cdi2}.

If one were to restrict to the perfect fluid case, and note that $\hat{\Psi}=0$ implies $\mathcal{A}=0$ which implies $\hat{\dot{\Psi}}=0$,  we have that $F_1=\hat{g}_1$ and it therefore follows that the condition \eqref{wws3} simplifies to

\begin{eqnarray}
\hat{g}_1\dot{\Psi}=0.\label{wws4}
\end{eqnarray}
Hence, as $\dot{\Psi}$ is nonzero, we have that $g_1$ is constant along the spatial congruence: $\hat{g}_1=0$ gives

\begin{eqnarray}
\hat{\rho}+3\hat{p}=0.\label{wws5}
\end{eqnarray}
We saw that conformal flatness is necessary, and from the field equations one checks that the quantities $\rho$ and $p$ are constant along $n^a$, thereby satisfying the condition of equation \eqref{wws5}.

Now, let us consider those cases with a nonvanishing $Q$. Again, with interest in the timelike case, since $\dot{\Psi}\neq0$, it is easily checked that $Q$ obeys

\begin{eqnarray}
Q^2=\left(g_1+\eta\right)\left(g_2+\Pi-\eta\right).\label{eig10}
\end{eqnarray}
(To obtain the above, note that $x^a$ is an eigenvector of the Ricci tensor with eigenvalue $\eta$, so that $\eta x_a=-3\varphi_a$ by \eqref{ck8}. Comparing components we have $3\dot{\varphi}=-\eta\dot{\Psi}$ and $3\hat{\varphi}=-\eta\hat{\Psi}$. Substitute this into \eqref{cdi1} and \eqref{cdi2}, rewrite \eqref{cdi1} in terms of $\dot{\Psi}$ as $\dot{\Psi}\neq0$, and substitute into \eqref{cdi2}.) Since the right hand side of \eqref{eig10} must be positive, expanding out the right hand side of \eqref{eig10} gives

\begin{eqnarray*}
g_1\left(g_2+\Pi\right)-\eta\left(g_1-\left(g_2+\Pi\right)\right)>\eta^2\longrightarrow g_1\left(g_2+\Pi\right)-\eta\left(g_1-\left(g_2+\Pi\right)\right)>0,
\end{eqnarray*}
so that one obtains the following upper bound on the eigenvalue:

\begin{eqnarray}
\eta<\frac{g_1\left(g_2+\Pi\right)}{g_1-g_2-\Pi}.\label{eig11}
\end{eqnarray}
We note that uniqueness of the system \eqref{cdi1} and \eqref{cdi2} requires $\eta\neq0$. If this is the case, as long as $\rho\geq p$, then for the case of a nonnegative cosmological constant, if $\eta\neq g_1-\left(g_2+\Pi\right)$, the conditions \eqref{imp1} will always hold.

We now return to the condition of \eqref{wws3}. In fact, the condition \eqref{wws3} can be brought to a parabolic form using the first equation of \eqref{biv} and \eqref{cmc3}:

\begin{eqnarray}
2Q\ddot{\Psi}+F_1\dot{\Psi}+\bar F_2\hat{\Psi}=0,\label{eig12}
\end{eqnarray}
with the introduction of the scalar

\begin{eqnarray*}
\bar F_2=F_2+\left(\rho+p+\Pi\right)\left(\frac{1}{3}\theta+\sigma\right).
\end{eqnarray*}

With the above picture \eqref{eig12}, for example, if one seeks a gradient CKV with $u^a$ component a constant of motion along $u^a$, then, one simply needs to check that a solution to the first order equation exists:

\begin{eqnarray}
F_1\dot{\Psi}+\bar F_2\hat{\Psi}=0.\label{eig13}
\end{eqnarray}
The above equation can further be analyzed by noting that is should be consistently propagated along $x^a$.

Another consideration is to attempt to look for a $u^a$-directed $x^a$, i.e. $\hat{\Psi}=0$, for the radiating case $Q\neq0$. Of course then the equation \eqref{eig12} reduces to

\begin{eqnarray}
2Q\ddot{\Psi}+F_1\dot{\Psi}=0.\label{eig14}
\end{eqnarray}
If both the quantities $F_1$ and $Q$ have no zeros, then, the above equation has a solution. On the other hand, if either one of quantities $F_1$ or $Q$ has a zero somewhere and the other has not, as we are only considering the proper conformal Killing case, the equation \eqref{eig14} will not admit a solution. As will later be seen, a gradient conformal observer experiences no radiation even in a radiating spacetime. Hence, as \eqref{eig14} should be consistently propagated along $x^a$, \eqref{eig14} admits a solution if and only if $Q$ and $F_1$ vanish simultaneously. This then would further impose the requirement that

\begin{eqnarray}
\hat{g}_1+\mathcal{A}\left(\rho+p+\Pi\right)=0.\label{eig15}
\end{eqnarray}

Thus, even in an accelerating spacetime that is simultaneously radiating, it is seen that we have a vanishing condition similar to the Robertson-Walker type case \eqref{wws5}, provided that $\rho+p+\Pi$ vanishes. As will be discussed later, the particular gradient CKV determines the foliation, with such specificity captured in one of the constant mean curvature conditions. One of these conditions is in fact specified by this linear relationship of the matter variables $\rho,p$ and $\Pi$.

%%%%%%%%%%%%%%%%%%%%%%%%%%%%%%%%%%%%%%%%%%%

\section{The constant mean curvature condition}\label{sec4}

%%%%%%%%%%%%%%%%%%%%%%%%%%%%%%%%%%%%%%%%%%%

We now provide an analysis of the CMC condition in context of some LRS solutions, and the constraint imposed by the conditions on these solutions. Some implications for the existence/nonexistence results of gradient CKVs are discussed.

\subsection{Spacelike hypersurfaces and mean curvature}

Given the existence of a timelike gradient CKV in a spacetime, the global Frobenius theorem guarantees a foliation \(\mathcal{W}_x\) of the region of the spacetime by spacelike hypersurfaces, the leaves of \(\mathcal{W}_x\), to which the CKV \(x^a\) is orthogonal. This follows from the fact that as \(x^a\) is a gradient, its dual is a closed form and hence, the associated orthogonal distribution to the CKV is integrable.

Our interest is in the leaves of the foliation \(\mathcal{W}_x\), in GCS LRS spacetimes. The unit normal to the leaves is the normalized timelike gradient CKV \(\tilde{x}^a=fx^a\) (defining the conformal observers, with \(f=1/\sqrt{-x_bx^b}\)). On such a spacelike hypersurface is therefore induced the Riemannian metric

\begin{eqnarray}\label{sh1}
z_{ab}=g_{ab}+\tilde{x}_a\tilde{x}_b.
\end{eqnarray}
From now on, we will label such a hypersurface as \(\mathcal{T}\). The projected covariant derivative on \(\mathcal{T}\) is then simply

\begin{eqnarray*}
\mathcal{D}_a=z^b_{\ a}\nabla_b.
\end{eqnarray*}

Using the fact that both \(f\) (actually the norm \(x_ax^a\)) and \(\varphi\) are constant on \(\mathcal{T}\), we have the second fundamental form computed as

\begin{eqnarray}\label{sh2}
\chi_{ab}=\frac{1}{2}\bar{\mathcal{L}}_{\tilde{x}}z_{ab}=f\varphi z_{ab},
\end{eqnarray}
with the bar over the Lie derivative operator indicating Lie derivative with respect to the connection \(\mathcal{D}_a\). This implies that these hypersurfaces are totally umbilical, i.e., the second fundamental form is pure trace. The mean curvature is therefore

\begin{eqnarray}\label{sh3}
\chi=-\frac{1}{3}z^{ab}\chi_{ab}=-f\varphi,
\end{eqnarray}
which is constant on \(\mathcal{T}\) as both \(f\) and \(\varphi\) are, i.e. \(\mathcal{T}\) is a CMC hypersurface. Of course, \(\mathcal{T}\) is maximal (\(\chi=0\)) if and only if \(x^a\) is a true KV. In other words, $\mathcal{T}$ is totally geodesic if and only if $x^a$ is a true KV. Indeed its obvious that $\mathcal{T}$ is time symmetric in such a case as $\chi_{ab}$ is identically zero.

Note that the constancy of $\varphi$ on $\mathcal{T}$ implies its gradient $\varphi^a$ is normal to $\mathcal{T}$, and hence $\varphi^a\propto x^a$. From \eqref{ck8} this gives that $x^a$ is an eigendirection for the Ricci tensor as was earlier mentioned in Section \ref{sec3}.

The constancy of the mean curvature essentially constrains the character of the gradient CKV, and therefore in essence, the ambient spacetime itself. For our purpose, we will seek to identify the constraints allowing for the CMC condition in GCS LRS spacetimes. This also provides necessary (and sufficient, in some cases) conditions for the spacetimes to admit a gradient CKV.

%%%%%%%%%%%%%%%%%%%%%%%%%%

\subsection{The CMC criteria and geometric constraints}

%%%%%%%%%%%%%%%%%%%%%%%%%%

We now obtain the constant mean curvature conditions for the LRS class of spacetimes. These conditions will then be discussed in detail, with some implications for the existence of a gradient CKV.

For an arbitrary scalar \(\psi\) in an LRS solution, the \(\mathcal{T}\)-gradient is expressed as

\begin{align}
\mathcal{D}_a\psi&=\nabla_a\psi+f^2\left(x^b\nabla_b\psi\right)x_a\nonumber\\
&=\left[\left(f^2\alpha_1^2-1\right)\dot{\psi}+f^2\alpha_1\alpha_2\hat{\psi}\right]u_a+\left[\left(f^2\alpha_2^2+1\right)\hat{\psi}+f^2\alpha_1\alpha_2\dot{\psi}\right]n_a.
\end{align}
It follows that the condition for the constancy of the scalar \(\psi\) on the hypersurface \(\mathcal{T}\) can be reduced to the following relation on $\mathcal{T}$:

\begin{eqnarray}\label{cme2}
\alpha_2\dot{\psi}+\alpha_1\hat{\psi}=0.
\end{eqnarray}
To see this, note that the $u^a$ and $n^a$ components of (41) must simultaneously vanish as they are orthogonal directions. Then noting $f^2=-1/(x_ax^a)=1/(\alpha_1^2-\alpha_2^2)$, the $u^a$ and $n^a$ components simplify respectively to

\begin{align*}
-u^a\mathcal{D}_a\psi&=\left(f^2\alpha_1^2-1\right)\dot{\psi}+f^2\alpha_1\alpha_2\hat{\psi}=\frac{\alpha_2}{\left(\alpha_1^2-\alpha_2^2\right)}\left(\alpha_1\hat{\psi}+\alpha_2\dot{\psi}\right),\\
n^a\mathcal{D}_a\psi&=\left(f^2\alpha_2^2+1\right)\hat{\psi}+f^2\alpha_1\alpha_2\dot{\psi}=\frac{\alpha_1}{\left(\alpha_1^2-\alpha_2^2\right)}\left(\alpha_1\hat{\psi}+\alpha_2\dot{\psi}\right).
\end{align*}
As $\alpha_1\neq0$ since $x^a$ is timelike, the required vanishing of the above leads to (42).

Obviously, we do not consider the case \(\alpha_1=0\) as the CKV is spacelike. Clearly, if we consider the case of \(\alpha_2=0\), one has that \(\hat{\psi}=0\).

If on the other hand we consider the case with \(\alpha_i\neq0\), then an obvious nontrivial solution is \(\dot{\psi}=\hat{\psi}=0\). And if neither \(\dot{\psi}\) nor \(\hat{\psi}\) is zero, then it follows that a constant \(\psi\) on \(\mathcal{T}\) must obey

\begin{eqnarray}\label{cme3}
\frac{\dot{\psi}}{\hat{\psi}}=-\frac{\alpha_1}{\alpha_2}=\frac{\dot{\Psi}}{\hat{\Psi}}\quad\mbox{which implies}\quad \frac{\dot{\psi}^2}{\hat{\psi}^2}>1,
\end{eqnarray}
using the timelike character of $x^a$.

In summary, the constancy of a scalar $\psi$ on $\mathcal{T}$ is given by exactly one of the following conditions:

\begin{itemize}

\item $\alpha_2=\hat{\psi}=0$ and $\dot{\psi}\neq0$;

\item $\hat{\psi}=\dot{\psi}=0$; and

\item $\hat{\psi}\neq0$ and $\dot{\psi}\neq0$.

\end{itemize}

We note that the last two conditions above are only applicable to LRS II solutions, since for the irrotational and twisting case, $\alpha_2$ is necessarily zero (see the reference \cite{as5} for discussions).

Now, let us return to the equation \eqref{sh3} and state the following

\begin{lemma}
Let \(x^a\) be a timelike gradient CKV in a LRS spacetime \(M\) with vanishing cosmological constant, and let \(\mathcal{T}\) be a hypersurface in \(M\) to which integral curves of \(x^a\) are orthogonal. Then, the evolution and propagation equations of the mean curvature $\chi$ of \(\mathcal{T}\) is given as

\begin{align}
\dot{\chi}&=\frac{\left[2\alpha_1\varphi^2-\left(\alpha_1^2-\alpha_2^2\right)\left(\alpha_1X_1+\alpha_2 X_2\right)\right]}{2\left(\alpha_1^2-\alpha_2^2\right)^{3/2}},\label{cme4}\\
\hat{\chi}&=-\frac{\left[2\alpha_2\varphi^2+\left(\alpha_1^2-\alpha_2^2\right)\left(\alpha_1X_2+\alpha_2 X_3\right)\right]}{2\left(\alpha_1^2-\alpha_2^2\right)^{3/2}},\label{cme5}
\end{align}
where we have defined the scalars

\begin{align*}
X_1&=-\frac{2}{3}g_1+\mathcal{E}-\frac{1}{2}\Pi,\\
X_2&=Q,\\
X_3&=2\xi^2-\frac{2}{3}\rho-\mathcal{E}-\frac{1}{2}\Pi,
\end{align*}
and where the scalar \(\mathcal{E}=E_{ab}e^ae^b\) is the electric part of the Weyl tensor.
\end{lemma}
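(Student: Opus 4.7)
The approach is a direct computation from $\chi = -f\varphi$, where $f = (\alpha_1^2-\alpha_2^2)^{-1/2}$. The chain rule gives $\dot\chi = -\dot f\,\varphi - f\,\dot\varphi$ and $\hat\chi = -\hat f\,\varphi - f\,\hat\varphi$, so the task splits into computing $\dot f,\hat f$ on the one hand and $\dot\varphi,\hat\varphi$ on the other.

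For the first piece, differentiating $f$ gives $\dot f = -f^3(\alpha_1\dot\alpha_1 - \alpha_2\dot\alpha_2)$ and $\hat f = -f^3(\alpha_1\hat\alpha_1 - \alpha_2\hat\alpha_2)$. The CKE \eqref{cmc1}--\eqref{cmc3} together with the bivector-vanishing conditions \eqref{biv} yield $\dot\alpha_1 = \varphi - \alpha_2\mathcal{A}$, $\dot\alpha_2 = -\alpha_1\mathcal{A}$, $\hat\alpha_1 = -\alpha_2(\theta/3 + \sigma)$, and $\hat\alpha_2 = \varphi - \alpha_1(\theta/3 + \sigma)$. After substitution the kinematic cross-terms cancel, leaving the clean identities $\alpha_1\dot\alpha_1 - \alpha_2\dot\alpha_2 = \alpha_1\varphi$ and $\alpha_1\hat\alpha_1 - \alpha_2\hat\alpha_2 = -\alpha_2\varphi$. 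This produces exactly the $2\alpha_1\varphi^2$ and $-2\alpha_2\varphi^2$ contributions in the numerators of \eqref{cme4} and \eqref{cme5}.

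For $\dot\varphi$ and $\hat\varphi$, I would differentiate the trace identity \eqref{cmc4}, $2\varphi = \alpha_1(2\theta/3 - \sigma) + \alpha_2\phi$, along $u^a$ and $n^a$ respectively, and then substitute the LRS evolution/propagation equations for $\theta, \sigma, \phi$ from the appendix. The key structural observation is that the combination $(2/3)\dot\theta - \dot\sigma$ produced by the Raychaudhuri and shear-evolution equations naturally contains $-(2/3)g_1 + \mathcal{E} - \Pi/2 = X_1$; the heat flux $Q = X_2$ enters through the momentum-type constraint appearing in the mixed kinematic pieces; and the propagation combination $(2/3)\hat\theta - \hat\sigma$ together with $\hat\phi$, via the Gauss relation on the 2-sheet (which brings in the twist through $2\xi^2$), delivers $X_3 = 2\xi^2 -(2/3)\rho - \mathcal{E} - \Pi/2$. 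After invoking the CKE \eqref{cmc1}--\eqref{cmc4} and the commutation relation \eqref{comrel} to collapse the residual kinematic cross-terms in $\mathcal{A}\theta, \mathcal{A}\phi, \theta\sigma$ and so on, one arrives at $2\dot\varphi = \alpha_1 X_1 + \alpha_2 X_2$ and $2\hat\varphi = \alpha_1 X_2 + \alpha_2 X_3$, which combine with the $\dot f, \hat f$ contributions to give the stated formulas.

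The hardest step is the algebraic bookkeeping of the residues in this second computation: the full substitution of the evolution/propagation equations produces a thicket of kinematic terms, and a systematic reduction is needed to isolate the $X_i$. A useful independent check is provided by Lemma \ref{leme2}: contracting $R_{ao} = -3\varphi_a$ with $u^a$ and $n^a$ and using the Ricci decomposition \eqref{ck12} gives expressions for $\dot\varphi$ and $\hat\varphi$ purely in terms of $g_1, g_2, Q, \Pi$ and the $\alpha_i$, and matching these against the $X_i$-form exposes the Bianchi-type relation between the Weyl scalar $\mathcal{E}$ and the matter variables that is forced by the presence of a gradient CKV.
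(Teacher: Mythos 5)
Your proposal is correct and follows essentially the same route as the paper: the same splitting $\dot\chi=-\dot f\varphi-f\dot\varphi$, the same use of the CKE and the bivector conditions to reduce $\dot f,\hat f$ to $\mp\alpha_{1,2}\varphi f^3$, and the same differentiation of the trace identity \eqref{cmc4} combined with \eqref{evo1}, \eqref{evo100}, \eqref{evo2}, \eqref{evo200} to obtain $2\dot\varphi=\alpha_1X_1+\alpha_2X_2$ and $2\hat\varphi=\alpha_1X_2+\alpha_2X_3$. The only addition is your consistency check via Lemma \ref{leme2}, which is a reasonable supplement but not needed for the derivation.
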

\begin{proof}
We compute

\begin{eqnarray*}
\dot{f}=-\frac{\alpha_1\dot{\alpha}_1-\alpha_2\dot{\alpha}_2}{\left(\alpha_1^2-\alpha_2^2\right)^{3/2}},
\end{eqnarray*}
so that upon using \eqref{cmc1} and the first equation of \eqref{biv} to respectively substitute for $\dot{\alpha}_1$ and $\dot{\alpha}_2$ we have

\begin{eqnarray*}
\dot{f}=-\frac{\alpha_1\varphi}{\left(\alpha_1^2-\alpha_2^2\right)^{3/2}}.
\end{eqnarray*}

Taking the ``dot'' derivative of \eqref{cmc4}, we can similarly use \eqref{cmc1} and the first equation of \eqref{biv} to substitute for $\dot{\alpha}_1$ and $\dot{\alpha}_2$ and obtain

\begin{eqnarray*}
2\dot{\varphi}=\alpha_1\left[\left(\frac{2}{3}\dot{\theta}-\dot{\sigma}\right)-\mathcal{A}\phi\right]+\alpha_2\left[\dot{\phi}-\mathcal{A}\left(\frac{2}{3}\theta-\sigma\right)\right]+\varphi\left(\frac{2}{3}\theta-\sigma\right).
\end{eqnarray*}
And upon using \eqref{evo1} and \eqref{evo100} to substitute for $(2/3)\dot{\theta}-\dot{\sigma}$ and $\dot{\phi}$ respectively we have

\begin{eqnarray*}
2\dot{\varphi}=\alpha_1\left[-\frac{1}{2}\left(\frac{2}{3}\theta-\sigma\right)^2+X_1\right]+\alpha_2\left[-\frac{1}{2}\phi\left(\frac{2}{3}\theta-\sigma\right)+X_2\right]+\varphi\left(\frac{2}{3}\theta-\sigma\right),
\end{eqnarray*}
with $X_1$ and $X_2$ as defined in the statement of the lemma. We can then use the definition of $\varphi$ in \eqref{cmc4} to establish

\begin{eqnarray*}
2\dot{\varphi}=\alpha_1X_1+\alpha_2X_2.
\end{eqnarray*}
Therefore, substituting $\dot{f}$ and $\dot{\varphi}$ into $\dot{\chi}=-\dot{f}\varphi-f\dot{\varphi}$ gives the evolution of $\chi$ \eqref{cme4}.

Similarly we have 

\begin{eqnarray*}
\hat{f}=-\frac{\alpha_1\hat{\alpha}_1-\alpha_2\hat{\alpha}_2}{\left(\alpha_1^2-\alpha_2^2\right)^{3/2}},
\end{eqnarray*}
Using \eqref{cmc2} and the equation resulting from substituting the first equation of \eqref{biv} into \eqref{cmc3}, to substitute respectively for $\hat{\alpha}_1$ and $\hat{\alpha}_2$ we have

\begin{eqnarray*}
\hat{f}=\frac{\alpha_2\varphi}{\left(\alpha_1^2-\alpha_2^2\right)^{3/2}}.
\end{eqnarray*}

We can take the ``hat'' derivative of \eqref{cmc4} and use the same substitutions for $\hat{\alpha}_1$ and $\hat{\alpha}_2$  to obtain

\begin{eqnarray*}
2\hat{\varphi}=\alpha_1\left[\left(\frac{2}{3}\hat{\theta}-\hat{\sigma}\right)-\left(\frac{2}{3}\theta-\sigma\right)\phi\right]+\alpha_2\left[\hat{\phi}-\left(\frac{2}{3}\theta-\sigma\right)\left(\frac{2}{3}\theta-\sigma\right)\right]+\varphi\phi.
\end{eqnarray*}
And upon using \eqref{evo2} and \eqref{evo200} to substitute for $(2/3)\hat{\theta}-\hat{\sigma}$ and $\hat{\phi}$ respectively we have

\begin{eqnarray*}
2\hat{\varphi}=\alpha_1\left[-\frac{1}{2}\phi\sigma-\frac{1}{3}\phi\theta+X_2\right]+\alpha_2\left[-\frac{1}{2}\phi^2+X_3\right]+\varphi\phi.
\end{eqnarray*}
Again, we can use the definition of $\varphi$ in \eqref{cmc4} to establish

\begin{eqnarray*}
2\hat{\varphi}=\alpha_1X_2+\alpha_2X_3.
\end{eqnarray*}
And upon substituting $\hat{f}$ and $\hat{\varphi}$ into $\hat{\chi}=-\hat{f}\varphi-f\hat{\varphi}$ gives the propagation of $\chi$ \eqref{cme5}.\qed
\end{proof}

The introduction of the scalars \(X_j\), for \(j\in\{1,2,3\}\), is actually not by coincidence. While they conveniently simplify the expressions for the CMC condition, as will be seen shortly, the scalars \(X_j\) are related to the evolution of null expansions of constant \(t\) and \(r\) surfaces in LRS spacetimes, along ingoing and outgoing null paths, and encode implications for the existence of black holes. (We knew the forms of the evolutions of the null expansions of surfaces before hand, and introduced the scalars purposefully so as to relate the CMC condition to the black hole horizons evolution.) We shall return to this later, but before then, let us examine each case of solutions to \eqref{cme2} for the mean curvature \(\chi\).

The CMC condition can effectively be cast as constraints on the scalars \(X_j\). More specifically, for a GCS LRS spacetime, the CMC condition for the spacelike leaves of the foliation, induced by the distribution of the gradient CKV is given by either one of the following cases.

\begin{enumerate}

\item \(\alpha_2=\hat{\chi}=0; \dot{\chi}\neq0\):

\ \\
In this case, we know that \(\hat{\chi}=-X_2/2=0\), and since the spacetime is necessarily shear-free (as is easily seen from the CKE). Combining with the requirement that $\dot{\chi}\neq0$, the CMC condition reduces to the pair of constraints on \(\mathcal{T}\):

\begin{align*}
X_1&\neq \frac{2}{9}\theta^2,\\
X_2&=0.
\end{align*}

Notice that the first equation is just a constraint on the evolution of $\theta$ (compare to the evolution equation \eqref{evo1} with vanishing shear):

\begin{eqnarray*}
\frac{2}{3}\dot{\theta}\neq\mathcal{A}\phi.
\end{eqnarray*}
Indeed it is clear that the spacetime cannot accelerate, and the above condition is simply the statement that the expansion is non-constant. In the irrotational and twisting LRS case, this is the only applicable CMC condition.

\item \(\hat{\chi}=\dot{\chi}=0\):

\ \\
In this case, we have that

\begin{eqnarray}\label{hiya1}
\left(\alpha_1^2+\alpha_2^2\right)X_2+\alpha_1\alpha_2\left(X_3+X_1\right)=0,
\end{eqnarray}
and hence, the CMC condition reduces to the pair of constraints on \(\mathcal{T}\):

\begin{align*}
X_3+X_1&=0,\\
X_2&=0.
\end{align*}
Since for $\xi\neq0$ implies $\alpha_2=0$ \cite{as5}, here  $\xi=0$, i.e. only LRS II type spacetimes allows for this CMC criterion (we henceforth consider those cases with vanishing cosmological constant). The first constraint above is, explicitly,

\begin{eqnarray}\label{expl1}
-\left(\rho+p+\Pi\right)=0.
\end{eqnarray}

We note a quite important statement here:\\

\textit{For any non-vacuum, accelerating, and dynamical GCS LRS spacetime, the mean curvature $\chi$ can be constant along neither $u^a$ nor $n^a$.}
\ \\

The above statement follows from the fact that if $\mathcal{A}\neq0$ (in which case the spacetime is necessarily of the LRS II type) and $\chi$ is constant along either one of $u^a$ nor $n^a$, it must be constant along the other by the CMC conditions. Because this requires \eqref{expl1} to hold on each leaf along $x^a$, there we must have $\dot{\rho}=0$. Since the spacetime is conformally static, $\mathcal{L}_x\rho=0$ implies $\hat{\rho}=0$ since $\alpha_2\neq0$ (for otherwise $\mathcal{A}=0$).

This is quite important as it implies that for for any GCS LRS spacetime which is not of the perfect fluid matter type, the timelike gradient CKV must have a nonvanishing component along $n^a$ as is seen from \eqref{cme3}. This then implies that all these spacetimes will fall into the next characterization. As will be seen in Section \ref{sec5}, this imposes quite stringent restrictions on horizon character in such spacetimes.

\item \(\hat{\chi}\neq0; \dot{\chi}\neq0\):

\ \\
For this case, we have that

\begin{eqnarray}\label{hiya2}
\left(\alpha_1^2-\alpha_2^2\right)X_2+\alpha_1\alpha_2\left(X_3-X_1\right)=0,
\end{eqnarray}
and the CMC condition reduces to the pair of constraints on \(\mathcal{T}\):

\begin{align*}
X_3-X_1&=0,\\
X_2&=0.
\end{align*}

Just as with the previous case, here, $\xi=0$. The first constraint above is then, explicitly,

\begin{eqnarray}\label{expl2}
-\frac{1}{3}R-2\mathcal{E}=0.
\end{eqnarray}

Now, it can be checked in a straightforward manner that the additional constraint \eqref{cme3}, i.e. the inequality specifying the timelike criterion, for this case, reduces to the statement (we have used the fact that \(X_3-X_1=0\) on \(\mathcal{T}\))

\begin{eqnarray}\label{expl3}
\varphi^2\left(\alpha_1^2-\alpha_2^2\right)\left[\varphi^2-\left(\alpha_1^2-\alpha_2^2\right)X_1\right]>0.
\end{eqnarray}
Since the difference \(\alpha_1^2-\alpha_2^2\) is strictly positive, it follows that the above inequality is equivalent to the requirement that

\begin{eqnarray}\label{expl4}
\varphi^2-\left(\alpha_1^2-\alpha_2^2\right)X_1>0.
\end{eqnarray}

Indeed, the condition \(X_1<0\) is sufficient for the inequality \eqref{expl4} to hold. In fact, this condition is quite reasonable on most physical grounds, and here is why: Suppose our interest is to work with the case of a vanishing cosmological constant \(\Lambda\). It is clear that there is a delicate interplay between the Weyl contribution and the scalar curvature of the ambient spacetime, on \(\mathcal{T}\). We know that if the scalar curvature \(R\) of the ambient spacetime is positive on \(\mathcal{T}\), it remains so along the conformal trajectory. If the electric Weyl scalar \(\mathcal{E}<0\), and one then insists that the pressure is non-negative, \(p\geq0\), it is easily checked that this will ensure that the SEC holds, a physically reasonable condition. It will then follow that \(X_1<0\).

Additionally, the following important comments are in order:

 \begin{itemize}

 \item The spacetime is scalar-flat, \(R=0\), if and only if it is conformally flat to conformal observers. This consequently implies that the spacetime must be flat as all thermodynamic variables would necessarily vanish.

 \item Conversely, a conformally flat LRS spacetime admits this criterion if and only if the conformal observers do not experience a non-zero scalar curvature.

 \end{itemize}

\end{enumerate}

We point out that in all of the cases discussed above, \(X_2\) is required to vanish. This is the criterion that the hypersurface neither gives off nor absorbs radiation. That is to say that the presence of heat flux is an obstruction to the existence of a timelike gradient CKV in the spacetimes of interest here. Stated more concretely, \textit{a conformally stationary LRS spacetime whose conformal observers experience a nonvanishing heat flux cannot be GCS.}

Now we are in the position to return to the result of Proposition \ref{theo1}. Consider a solution with a vanishing $Q$, and let us suppose that \eqref{imp1} is true. Furthermore, let us assume that $\alpha_i\neq0,\dot{\chi}\neq0,$ and $\hat{\chi}\neq0$. Then, from \eqref{cdi5} and \eqref{cme3}, it follows that one has the following ratio of the derivatives of the mean curvature function:

\begin{eqnarray}
\frac{\hat{\chi}}{\dot{\chi}}=-\frac{g_1}{g_2+\Pi}.\label{if1}
\end{eqnarray}

Suppose we have a perfect fluid matter type. If $p=0$, we have $g_1=g_2$. In this case $\alpha_1=\alpha_2$, i.e. $x^a$ is null, contradicting that $x^a$ is timelike. One immediately has that

\begin{proposition}
An inhomogeneous pressureless LRS perfect fluid is not gradient conformally stationary.
\end{proposition}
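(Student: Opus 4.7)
The plan is to proceed by contradiction: assume $\mathcal{M}$ is an inhomogeneous pressureless LRS perfect fluid (so $Q=\Pi=0$, $p=0$, $\hat\rho\neq 0$) admitting a timelike gradient CKV $x^a=\alpha_1 u^a+\alpha_2 n^a$. I will show that each of the three CMC alternatives identified in the preceding subsection either forces homogeneity or forces $x^a$ to be null, which in either case is incompatible with the hypotheses.

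First I would dispense with cases (1) and (2) of the CMC classification. In case (2) the constraint $X_3+X_1=-(\rho+p+\Pi)=0$ collapses to $\rho=0$, i.e.\ vacuum, which is excluded by the nontrivial matter content. In case (1) we have $\alpha_2=0$; the bivector condition $\dot\alpha_2+\alpha_1\mathcal{A}=0$ and the timelikeness of $x^a$ ($\alpha_1\neq 0$) then force $\mathcal{A}=0$, while shear-freeness is immediate from the CKE, and $\Omega=0$ has been imposed throughout the section. The classical shear-free dust theorem then forces the spacetime to be FLRW, contradicting the inhomogeneity assumption. Hence only case (3) can occur, with $\alpha_i\neq 0$ and $\dot\chi,\hat\chi\neq 0$.

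In case (3) the hypotheses \eqref{imp1} are satisfied: with $p=0$, $\Lambda=0$, and $\rho\neq 0$ one has $g_1=g_2=\rho/2\neq 0$ and $g_1 g_2\neq 0=Q^2$. The ratio formula \eqref{if1} therefore reads
$$\frac{\hat\chi}{\dot\chi}=-\frac{g_1}{g_2}=-1,$$
while the constancy condition \eqref{cme3} applied to $\psi=\chi$ gives $\dot\chi/\hat\chi=-\alpha_1/\alpha_2$. Equating the two yields $\alpha_1^2=\alpha_2^2$, hence $x_a x^a=\alpha_2^2-\alpha_1^2=0$, so $x^a$ is null, contradicting its assumed timelike character and completing the argument.

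The least routine step, and what I expect to be the main obstacle, is the exclusion of case (1): it hinges on invoking the Ellis shear-free dust theorem rather than following from the CMC constraints alone. If a fully self-contained route is preferred, one can instead substitute $\alpha_2=0$ into the integrability condition \eqref{wws3} specialized to perfect fluid (using $\hat\Psi=0\Rightarrow\mathcal{A}=0\Rightarrow\hat{\dot\Psi}=0$, exactly as in the discussion leading to \eqref{wws4}) to extract $\hat g_1=0$ directly, i.e.\ $\hat\rho=0$, again contradicting inhomogeneity.
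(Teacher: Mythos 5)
Your argument is correct and is essentially the paper's own: the decisive step is identical, namely that in the surviving case ($\alpha_i\neq0$, $\dot\chi\neq0$, $\hat\chi\neq0$) the ratio \eqref{if1} with $p=\Pi=Q=\Lambda=0$ gives $g_1=g_2$, so the constancy relation \eqref{cme3} forces $\alpha_1=\alpha_2$ and a null CKV, contradicting timelikeness. Your explicit elimination of the other two CMC alternatives is what the paper leaves implicit (resting on Theorem \ref{theo1} and the integrability condition \eqref{wws4}, which is precisely your ``self-contained route'' for the $\alpha_2=0$ case), so the appeal to the shear-free dust theorem, while valid, is not needed.
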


An immediate consequence of the above proposition is that the Lemaitre-Tolman-Bondi solution is not gradient conformally stationary.

Let us stay with the perfect fluid case, but suppose $p\neq0$. Applying the timelike criterion to \eqref{if1}, we find that if both the WEC and SEC are imposed, necessarily, the pressure is strictly negative, $p<0$: The timelike criterion imposes

\begin{eqnarray}
g_1^2-g_2^2=2p\left(\rho+p\right)<0.\label{if2}
\end{eqnarray}
It would then follow from the SEC that $\rho<0$. Hence, we state that

\begin{proposition}
An inhomogeneous LRS perfect fluid which obeys the weak and strong energy conditions is not gradient conformally stationary.
\end{proposition}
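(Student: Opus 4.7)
My plan is to argue by contradiction, splitting on the causal direction of the timelike gradient CKV $x^a=\alpha_1 u^a+\alpha_2 n^a$.

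In the branch $\alpha_2=0$ (so $x^a\parallel u^a$), I would read off directly from the conformal Killing equations \eqref{cmc1}--\eqref{cmc4} together with the gradient constraints \eqref{biv} that the solution is shear-free, acceleration-free, with $\alpha_1$ and $\theta$ depending only on proper time; for a perfect fluid this is the kinematic signature of a Robertson-Walker spacetime, which is homogeneous and violates the hypothesis. This branch is where I foresee the most care, because Theorem \ref{theo1} and Proposition \ref{propoor1} are formulated under $p\neq 0$ and $\rho>0$, so they cannot simply be cited to cover the pressureless sub-case; the CKE-based deduction above avoids this issue and is complementary to the preceding pressureless proposition.

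In the substantive branch $\alpha_1\alpha_2\neq 0$, my plan is to use the observation from Section \ref{sec4} that $\varphi$ is constant on $\mathcal{T}$ and hence $\varphi^a\propto x^a$; Lemma \ref{leme2} then promotes $x^a$ to a Ricci eigendirection. For a perfect fluid the Ricci operator has eigenvalue $-g_1$ along $u^a$ and $g_2$ along $n^a$, and matching these at the nonzero components of $x^a$ yields the single algebraic pinch $g_1+g_2=\rho+p=0$. Equivalently, substituting \eqref{cdi1} and \eqref{cdi2} with $Q=\Pi=0$ into the constancy constraint $\alpha_2\dot{\varphi}+\alpha_1\hat{\varphi}=0$ produces $(g_1+g_2)\alpha_1\alpha_2=0$, which is the same relation that \eqref{if1} combined with the timelike criterion \eqref{cme3} is really exploiting.

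To close, WEC forces $\rho\geq 0$ and therefore $p=-\rho\leq 0$, while SEC in the form $\rho+3p\geq 2\Lambda$ reduces to $-2\rho\geq 2\Lambda$; with the standing $\Lambda\geq 0$ this pins $\rho=p=\Lambda=0$, so the spacetime is vacuum and trivially homogeneous, contradicting the hypothesis. All configurations having been excluded, the proposition follows.
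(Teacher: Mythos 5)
Your proof is correct, and in the decisive branch it takes a genuinely different --- and in fact tighter --- route than the paper. The paper's own argument presupposes $\alpha_i\neq0$ and $\dot{\chi},\hat{\chi}\neq0$, forms the ratio \eqref{if1}, and applies the timelike criterion \eqref{cme3} to extract only the inequality $g_1^2-g_2^2=2p\left(\rho+p\right)<0$, from which it deduces $p<0$ via the WEC and then appeals to the SEC. You instead use the fact that a gradient CKV is automatically a Ricci eigendirection (Lemma \ref{leme2} together with $\varphi^a\propto x^a$ from Section \ref{sec4}), so that matching the perfect-fluid eigenvalues $-g_1$ along $u^a$ and $g_2$ along $n^a$ at a vector with $\alpha_1\alpha_2\neq0$ forces the exact equality $g_1+g_2=\rho+p=0$; as you observe, this is the same relation one gets by substituting \eqref{cdi1}--\eqref{cdi2} with $Q=\Pi=0$ into $\alpha_2\dot{\varphi}+\alpha_1\hat{\varphi}=0$, which yields $\left(g_1+g_2\right)\alpha_1\alpha_2=0$. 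This equality is strictly stronger than the paper's inequality and makes the closing step transparent: WEC gives $\rho\geq0$ and $p=-\rho$, and the SEC then forces $\rho=p=0$, i.e.\ vacuum, contradicting inhomogeneity of the fluid. (This is a real gain: the inequality $2p\left(\rho+p\right)<0$ alone, i.e.\ $p<0$ with $\rho+p>0$, is compatible with both the WEC and the SEC --- take $\rho=1$, $p=-1/4$ --- so the paper's final appeal to the SEC implicitly relies on the sharper equality that your argument makes explicit.) Your separate treatment of the comoving branch $\alpha_2=0$ is also a necessary addition rather than a redundancy: the derivation of \eqref{if1} explicitly excludes that case, and Theorem \ref{theo1} is stated only for $p\neq0$, so one does need your direct deduction from the CKE and \eqref{biv} that $\mathcal{A}=\sigma=0$ there; to fully close that branch one should add (as the paper does for Theorem \ref{theo1}) that \eqref{evo1} and \eqref{evo3} then give $\mathcal{E}=0$, whence \eqref{evo201} and \eqref{evo301} yield $\hat{\rho}=\hat{p}=0$ and the fluid is homogeneous. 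With that small elaboration your argument is complete.
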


\begin{remark}[Remark 1.]
The above propositions appear to further strengthen the statement of Theorem \ref{theo1}. Explicitly stated, the Propositions above imply that \textit{the only GCS LRS perfect fluid obeying standard energy conditions is the Robertson-Walker type solution.}
\end{remark}

%%%%%%%%%%%%%%%%%%%%%%%%%%

\section{Implications for the presence of black holes and character of horizons}\label{sec5}

%%%%%%%%%%%%%%%%%%%%%%%%%%

In this section we discuss some implications of the CMC condition for the existence/presence of black holes and the character of their associated horizons. Specifically, for a particular foliation associated to one of the three CMC conditions, it is possible, in principle, to 1.) determine the evolution of a given marginally outer trapped surface and 2.) detect the presence of a black hole. The interested reader is referred to \cite{hamid1,as1,as2,as3} for more discussion on the character of horizons in LRS solutions.

As was earlier mentioned, the scalars $X_j$ are related to the evolution of null expansion scalars of 2-surfaces, which will be clarified shortly. We know that the evolution of the null expansion scalars of surfaces are used in the characterization of horizons of black holes. Therefore, we expect that the ratio of the linear combinations of the scalars $X_j$ which characterizes the CMC condition could be used to determine the causal character of horizons in spacetimes, and consequently may be used to determine whether or not a given horizon bounds a black hole. To help with easier following of the discussions and results of this section, we give a brief but decent introduction to the objects of interest here.

A compact 2-surface $\mathcal{S}$ in a spacetime admits two null directions, identifying the directions of outgoing and ingoing null rays. We denote these directions respectively by \(k^a\) and \(l^a\), and they are normalized as $k_al^a=-1$. For our purposes here, with specialization to LRS spacetimes, we consider surfaces of constant $t$ and $r$, i.e. those with induced metric $N_{ab}$ (we shall assume, henceforth, spherical symmetry). Then, the respective expansions (divergences) of $k^a$ and $l^a$ are

\begin{eqnarray*}
\theta_k=N^{ab}\nabla_ak_b;\qquad \theta_l=N^{ab}\nabla_al_b.
\end{eqnarray*}

A \textit{marginally trapped surface} (MTS) is that on which, at all points, \(\theta_k=0\) and \(\theta_l<0\). A 3-dimensional hypersurface foliated by MTS is called a \textit{marginally trapped tube} (MTT). (We follow some of the standard references on the subject \cite{sh1,ash1,ash2,ib1,ib2,ib3}.) In the case of LRS spacetimes, fixing the gauge $k^a=u^a+n^a$ and $l^a=(1/2)(u^a-n^a)$, the expansions have the respective covariant forms \cite{hamid1,as1}

\begin{eqnarray}
\theta_k=\frac{2}{3}\theta-\sigma+\phi;\qquad \theta_l=\frac{1}{2}\left(\frac{2}{3}\theta-\sigma-\phi\right).
\end{eqnarray}

On an MTT $\bar H$, one can always find a (constant) function $c$ such that the vector field \cite{sh1,ib2,ib3}

\begin{eqnarray}
y^a=k^a-cl^a,
\end{eqnarray}
is tangent to $\bar H$. (Variation of $\mathcal{S}$ along $y^a$ induces the foliation.) Since $y_ay^a=2c$, the causal character of $\bar H$ at a point is determined by the sign of $c$: spacelike for $c>0$, timelike for $c<0$, and null for $c=0$. To then determine an explicit expression for the function $c$, one simply notes that the expansion $\theta_k$ is Lie dragged along $y^a$, i.e. $\mathcal{L}_y\theta_k=0$, to find

\begin{eqnarray}
c=\frac{\mathcal{L}_k\theta_k}{\mathcal{L}_l\theta_k}\;.
\end{eqnarray}

It will be understood that $c$ is constant at all points of $\bar H$. In this case, $c$ characterizes the evolution of a given MTS. That is, an MTS will evolve into a spacelike, timelike, or null MTT if \(c\) is positive, negative, or zero, respectively.

If the NEC holds, $\mathcal{L}_k\theta_k\leq0$. It therefore follows that characterization of an MTS is captured in the sign of $\mathcal{L}_l\theta_k$, and the MTTs have specialized names: If the NEC strictly holds, the MTT is a dynamical horizon (DH) for $\mathcal{L}_l\theta_k<0$ and a timelike membrane (TLM) for $\mathcal{L}_l\theta_k>0$. The MTT is an isolated horizon (IH) (or simply null horizon if there is no ambiguity) if and only if $\mathcal{L}_k\theta_k=0$.

A \textit{marginally outer trapped surface} (MOTS) generalizes an MTS, where there is no sign restriction on the ingoing expansion $\theta_l$). Henceforth, we simply use `MOTS'.

Now, all of the above discussions with respect to the function $c$ are related to a notion of the \textit{stability} of MOTS which was introduced by Andersson \textit{et al.} in \cite{and1}. This notion is captured via an eigenvalue problem for a second order elliptic operator, obtained through the variation of the expansion $\theta_k$ along the field $n^a$ scaled by some positive function. Stability is then determined by the sign and `\textit{realness}' of the eigenvalue, with strict positivity implying strict stability. Roughly put, the work of \cite{and1} demonstrates that a strictly stable MOTS in an initial data set, under the assumption of the DEC, will evolve into a DH containing trapped surfaces just to the `inside'. Containing trapped surfaces just to the inside of a horizon is therefore akin to the condition that $\mathcal{L}_l\theta_k<0$, and hence, under the NEC assumption, only DH and IH will bound black holes (and obviously, on the DH for an evolving black hole).

It was observed in \cite{as4} that while the positivity of $c$, for dynamical horizons, is necessary for the stability of a MOTS (at least in the LRS II case), it is not sufficient. It is further required that the bound

\begin{eqnarray}\label{bound1}
0<c<1,
\end{eqnarray}
is obeyed, which imposes that the expansion $\theta_k$ must decrease along both the preferred temporal and spatial directions for a evolving dynamical black hole, i.e. $\dot{\theta}_k,\hat{\theta}_k<0$. This in fact provides an easy way to rule out stable MOTS:\\
\ \\
\noindent\textit{A MOTS with outgoing null expansion that is non-decreasing along at least one of the canonical directions $u^a$ or $n^a$ will not evolve into a DH containing a black hole.}\\
\ \\
So, for instance, for a given LRS II spacetime $M$, on the 2-sphere, one can immediately compute and check the sign of the temporal and spatial derivatives of $\theta_k$ to establish non-existence of spherical DH in $M$.

Now, since we hope to ensure our considerations capture truly dynamical black hole LRS spacetimes (by this we mean those cases where the horizons are non-minimal in the sense that, the null expansions do not both vanish identically), only the LRS II type solutions are admissible as was shown in \cite{as3}. We will therefore restrict to the LRS II class of spacetimes. For this case, explicitly, the variations of the outgoing null expansion along the two null directions are found as

\begin{align}
\mathcal{L}_k\theta_k&=-\left(\rho+p+\Pi\right)+2Q,\label{vex1}\\
\mathcal{L}_l\theta_k&=\frac{1}{3}\left(\rho-3p\right)+2\mathcal{E}.\label{vex2}
\end{align}

We have seen in the previous section that the conformal observers do not detect a non-vanishing $Q$, as this is an obstruction to the CMC condition. If one starts with a spacetime with a vanishing $Q$, one can analyze the horizon behavior without concerns for the discrepancy in flux observation by the conformal observers and those with four-velocity $u^a$. (It will be seen shortly that the CMC condition has a more direct consequence for MOTS evolution. This, in fact, can already be seen from the equations \eqref{vex1} and \eqref{vex2}.)  In the case of a vanishing $Q$ we will insist that the anisotropy scalar $\Pi$ is nonzero, since otherwise this would imply that the spacetime must be the Robertson-Walker solution, which we do not analyze here. (The reason will be adequately clarified at the end of the forthcoming subsection.)

On the other hand, for $Q\neq0$, the relationship between the CMC condition and MOTS dynamic depends crucially on whether the black hole radiates or absorbs radiation. As such, we will split our discussion on the implications of the CMC condition for presence and evolution of MOTS into two subcases: $Q=0$ and $Q\neq0$.

\subsection{Vanishing $Q$}

Let us begin with the case of vanishing $Q$. In the absence of the heat flux $Q$, it is clear that

\begin{align}
\mathcal{L}_k\theta_k&=X_3+X_1,\label{vex3}\\
\mathcal{L}_l\theta_k&=X_1-X_3,\label{vex4}
\end{align}
providing a direct relationship between the CMC condition and MOTS evolution. The bound on the constant function $c$ then imposes that $X_1,X_3<0$ on a DH. This provides an alternative non-existence check, just as with the case of the derivatives of the null expansion $\theta_k$. 

In fact, by expressing the variations of the null expansion $\theta_k$ in terms of the scalars $X_j$, one can establish the following bound on the Weyl scalar $\mathcal{E}$:

\begin{eqnarray}
\mathcal{E}<p.\label{vex5}
\end{eqnarray}
This follows from the fact that

\begin{eqnarray*}
X_1<0\quad\mbox{implies}\quad \frac{1}{2}\Pi>-\mathcal{E}-\frac{2}{3}\rho,
\end{eqnarray*}
so that

\begin{eqnarray*}
X_3<0\quad\mbox{implies}\quad \eqref{vex5}.
\end{eqnarray*}

For example, it is clear that a pressureless solution will necessarily have that $\mathcal{E}<0$, in the absence of the cosmological constant.

We could discuss several consequences for the spacetime variables, with regards to the form of the variations \eqref{vex3} and \eqref{vex4}. However, this is not the essence here. Our interest is in what is encoded in the scalars $X_j$ about the (non-)existence of black holes and their horizon properties. The role of the right hand side of \eqref{vex3} and \eqref{vex4}, in the CMC problem for LRS spacetimes, allows us to make several statements about existence of black holes and the causal character of horizons, which we write down as a collection of propositions. The NEC is assumed and the foliation induced by the CKV will be denoted by $\mathcal{W}_x$ as before.

\begin{proposition}\label{propos1}
Suppose an LRS II spacetime $M$ admits a timelike gradient (proper) CKV \(x^a\) and a leaf $\mathcal{T}$ of $\mathcal{W}_x$ contains a marginally outer trapped 2-sphere \(\mathcal{S}\). If the mean curvature of $\mathcal{T}$ is constant along $u^a$ and \(X_1<X_3\) on \(\mathcal{S}\), then the spacetime contains a black hole with a null boundary (IH). Consequently, the spacetime is singular.
\end{proposition}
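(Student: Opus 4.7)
My plan is to funnel the hypothesis into exactly one of the three CMC classifications of Section \ref{sec4}, read off the horizon character from the identifications \eqref{vex3}--\eqref{vex4} of null expansion variations with the scalars $X_j$, and then close with Penrose's singularity theorem. The first step is to observe that $\chi=-f\varphi$ is constant on each leaf $\mathcal{T}$, so its spacetime gradient is proportional to $x_a = \alpha_1 u_a + \alpha_2 n_a$. The hypothesis $\dot\chi = 0$ combined with $\alpha_1 \neq 0$ (since $x^a$ is timelike) forces $\nabla_a \chi = 0$, and in particular $\hat\chi = 0$ as well. This rules out Cases 1 and 3 of the CMC conditions and places us in Case 2, which together with the vanishing of $Q$ in this subsection gives $X_1 + X_3 = 0$ and $X_2 = 0$ on $\mathcal{T}$.

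Next I would substitute these constraints into \eqref{vex3}--\eqref{vex4}. The relation $X_1 + X_3 = 0$ yields $\mathcal{L}_k \theta_k = 0$ on $\mathcal{S}$, which by the horizon classification recalled earlier in the section is precisely the defining condition for the MTT through $\mathcal{S}$ to be an isolated horizon, i.e.\ a null boundary. The strict inequality $X_1 < X_3$ converts \eqref{vex4} into $\mathcal{L}_l \theta_k = X_1 - X_3 < 0$, so $\mathcal{S}$ is a stable MOTS in the Andersson--Mars--Simon sense. Under the NEC (the standing assumption of this section), this stability produces fully trapped 2-surfaces just to the interior of $\mathcal{S}$, so the IH through $\mathcal{S}$ bounds a region containing trapped surfaces, which is the characterizing feature of a black hole.

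The singularity conclusion then follows by applying Penrose's classical theorem to the trapped surfaces produced in the previous step, together with the NEC, yielding future null geodesic incompleteness of $M$.

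The main obstacle I anticipate is the first step: making rigorous the implication $\dot\chi = 0 \Rightarrow \hat\chi = 0$, so that one is genuinely in Case 2 rather than some degenerate configuration straddling two of the CMC cases. This relies crucially on $\chi$ being a function only of the leaf label, together with the timelike character of $x^a$. Once this reduction is secured, the remaining chain of implications is a direct reading of \eqref{vex3}--\eqref{vex4} followed by invocations of standard stability and singularity results.
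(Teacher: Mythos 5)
Your argument is correct and follows essentially the same route as the paper: reduce to the second CMC criterion, use $\mathcal{L}_k\theta_k=X_1+X_3=0$ and $\mathcal{L}_l\theta_k=X_1-X_3<0$ (with $Q=0$) to get an isolated horizon with trapped surfaces just inside, then invoke the standard singularity argument. The only cosmetic difference is that you obtain $\hat{\chi}=0$ directly from the constancy relation $\alpha_1\hat{\chi}+\alpha_2\dot{\chi}=0$ with $\alpha_1\neq0$, whereas the paper eliminates the first and third CMC cases using $\dot{\chi}=0$ and $X_1\neq X_3$ respectively; both resolve your anticipated obstacle equally well.
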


To see the above, note that the condition $X_1<X_3$ ensures that $\mathcal{L}_l\theta_k<0$ (which then implies that the third CMC criterion is ruled out). The condition $\dot{\chi}=0$ also  then rules out the first criterion, leaving the second CMC criterion, so that the $S$ evolves into a null horizon with trapped surfaces just to the inside.

Alternatively, simply considering a shearing and/or accelerating spacetime rules out the first criterion.

We next state the following result:

\begin{proposition}\label{propos2}
Suppose an LRS II spacetime $M$ admits a timelike gradient (proper) CKV \(x^a\), and a leaf $\mathcal{T}$ of $\mathcal{W}_x$ has mean curvature $\chi$ that is nonconstant along $n^a$. Then, a horizon $\bar H$ in $M$ which intersects $\mathcal{T}$ at a marginally outer trapped 2-sphere \(\mathcal{S}\) cannot enclose a black hole.
\end{proposition}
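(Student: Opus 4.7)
The strategy is to combine the hypothesis $\hat\chi\neq 0$ with the trichotomy of CMC criteria established in Section \ref{sec4} to pin down the only admissible algebraic constraint on the scalars $X_j$ on $\mathcal{T}$, then translate that constraint into a vanishing of the inward null-variation of $\theta_k$ at $\mathcal{S}$, and finally invoke the trapped-side stability criterion recalled just before the proposition.

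First I would observe that the first and second CMC criteria both explicitly require $\hat\chi=0$, so the hypothesis immediately selects the third criterion as the only one compatible with $\mathcal{T}$ being CMC, yielding $X_2=0$ and $X_3=X_1$ pointwise on $\mathcal{T}$. Because $\mathcal{S}\subset\mathcal{T}$, the heat-flux scalar $Q=X_2$ vanishes on $\mathcal{S}$, so the null-expansion variation formulae \eqref{vex3}--\eqref{vex4} apply and produce
\begin{equation*}
\mathcal{L}_l\theta_k = X_1 - X_3 = 0, \qquad \mathcal{L}_k\theta_k = X_1 + X_3 = 2X_1, \qquad \text{on } \mathcal{S}.
\end{equation*}

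To close the argument, I would appeal to the trapped-side criterion recalled before the proposition: under the NEC, a horizon $\bar H$ encloses a black hole only if trapped surfaces sit just to the inside of some MOTS on it, which operationally requires $\mathcal{L}_l\theta_k<0$ strictly at that MOTS (equivalently, that the bound \eqref{bound1} be attained). Here the strict inequality fails at $\mathcal{S}$, so $\bar H$ cannot be a dynamical horizon bounding a black hole. The only remaining possibility is that $\bar H$ be an isolated horizon at $\mathcal{S}$, which would demand $\mathcal{L}_k\theta_k=0$, i.e.\ $X_1=0$; combined with $X_1=X_3$ this forces both null variations to vanish simultaneously, yielding a degenerate horizon along which no trapped surfaces form immediately on either side of $\mathcal{S}$.

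The main subtlety I expect to negotiate is the marginal case $\mathcal{L}_l\theta_k=0$, which sits at the interface between the dynamical-horizon and timelike-membrane branches and must be ruled out rather than deferred. The plan is to dispose of it through the degeneracy just noted: the simultaneous vanishing of both null variations at $\mathcal{S}$ is precisely the obstruction to the formation of trapped surfaces on either side of $\mathcal{S}$, which is the operational notion of ``not enclosing a black hole'' used consistently throughout Section \ref{sec5}.
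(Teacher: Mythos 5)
Your proposal is correct and follows essentially the same route as the paper: the hypothesis $\hat{\chi}\neq 0$ forces the third CMC criterion, hence $X_2=Q=0$ and $X_1=X_3$ on $\mathcal{T}$, so at $\mathcal{S}$ one gets $\mathcal{L}_l\theta_k=X_1-X_3=0$ and the horizon cannot have trapped surfaces just inside, i.e.\ cannot enclose a black hole. Your additional discussion of the isolated-horizon subcase (forcing $X_1=X_3=0$ and a degenerate horizon) is a harmless refinement beyond the paper's brief argument, which already concludes directly from the vanishing of $\mathcal{L}_l\theta_k$.
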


Firstly, we note that by imposing that $\chi$ is nonconstant along the $n^a$ direction says that only the third CMC criterion is possible. And since the hypersurfaces $\mathcal{T}$ and $\bar H$ intersect at the surface $\mathcal{S}$, $\mathcal{L}_l\theta_k$ must vanish along $\bar H$.

Following from the above propositions, we also have as a corollary

\begin{corollary}\label{cor1}
Let an LRS II spacetime $M$ admit a gradient (proper) CKV \(x^a\). Then, a dynamical horizon in $M$ cannot intersect a leaf of $\mathcal{W}_x$.
\end{corollary}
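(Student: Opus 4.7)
The plan is to argue by contradiction: suppose a dynamical horizon $\bar H$ in $M$ intersects some leaf $\mathcal{T}$ of $\mathcal{W}_x$ at a marginally outer trapped $2$-sphere $\mathcal{S}$, and then exhaust the three CMC criteria classified in Section \ref{sec4} on $\mathcal{T}$ at $\mathcal{S}$. Since $\bar H$ is a DH, the strict NEC assumption gives $\mathcal{L}_k\theta_k<0$ on $\mathcal{S}$ and the dynamical condition itself gives $\mathcal{L}_l\theta_k<0$; via the identifications \eqref{vex3}--\eqref{vex4} these translate at each point of $\mathcal{S}$ into the strict inequalities $X_1+X_3<0$ and $X_1-X_3<0$.

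The two CMC criteria with $\alpha_2\neq0$ then fall out at once. In criterion 3 the algebraic constraint $X_1-X_3=0$ forces $\mathcal{L}_l\theta_k=0$ on $\mathcal{T}$ and so in particular on $\mathcal{S}$, directly contradicting the DH condition; this is essentially the content of Proposition \ref{propos2}. In criterion 2 the constraint $X_1+X_3=0$ gives $\mathcal{L}_k\theta_k=0$, so strict NEC fails on $\mathcal{S}$ and $\bar H$ can at best be an isolated (null) horizon, which is essentially Proposition \ref{propos1}. In both situations the assumed dynamical character of $\bar H$ is contradicted.

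The remaining case is CMC criterion 1, in which $\alpha_2=0$ (so $x^a$ is purely timelike and the leaves of $\mathcal{W}_x$ are $u^a$-orthogonal) together with $X_2=Q=0$; the CKE \eqref{cmc2} and \eqref{cmc4} then force $\sigma=0$, while the first equation of \eqref{biv} forces $\mathcal{A}=0$. Within the LRS II class these kinematic conditions together with $Q=0$ collapse the local metric into the Robertson-Walker (warped product) form, for which $\Pi=0$; but this is precisely the case excluded in Section \ref{sec5} by the standing assumption that $\Pi\neq0$ whenever $Q=0$. Criterion 1 is therefore also incompatible with the setup, and the three cases having been exhausted, no DH in $M$ can intersect a leaf of $\mathcal{W}_x$.

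The step I anticipate as the main obstacle is the reduction in criterion 1: the deductions $\sigma=\mathcal{A}=0$ from the conformal Killing equations are immediate, but collapsing the remaining LRS II data onto the excluded Robertson-Walker class requires the shear evolution equation \eqref{evo3} (which under $\sigma=\mathcal{A}=0$ yields a relation such as $2\mathcal{E}=\Pi$) together with careful use of the propagation/constraint equations to conclude $\Pi=0$. If one prefers not to invoke that classification directly, an alternative route is available: the ratio $c=(X_1+X_3)/(X_1-X_3)$ fails to lie in the stability window $0<c<1$ of \eqref{bound1} whenever $X_3<0$, so that $\mathcal{S}$ is not a strictly stable MOTS and $\bar H$ cannot bound a black hole in the sense used throughout Section \ref{sec5}.
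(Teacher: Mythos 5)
Your handling of the second and third CMC criteria is correct and is essentially the paper's own route: the corollary is presented there as following from Propositions \ref{propos1} and \ref{propos2}, i.e. criterion 2 forces $\mathcal{L}_k\theta_k=X_1+X_3=0$ (so the intersecting MTT is null/isolated rather than spacelike) and criterion 3 forces $\mathcal{L}_l\theta_k=X_1-X_3=0$, each contradicting the dynamical horizon conditions once one notes, as you do implicitly, that $X_2=Q=0$ on every leaf so that the identifications \eqref{vex3}--\eqref{vex4} apply on $\mathcal{S}$.

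The genuine gap is your disposal of criterion 1. From $\alpha_2=0$ one indeed gets $\mathcal{A}=0$ (first equation of \eqref{biv}), $\sigma=0$ (from the CKE) and $Q=0$, and combining \eqref{evo1} with \eqref{evo3} gives $2\mathcal{E}=\Pi$, as you anticipate. But these conditions do not collapse the metric to Robertson--Walker with $\Pi=0$: they only force the generalized Robertson--Walker warped-product form $-dt^2+S(t)^2h_{ij}dx^idx^j$ with $h$ an arbitrary LRS $3$-metric, and for non-constant-curvature $h$ one has $\Pi=2\mathcal{E}\neq0$ while $x^a=Su^a$ is still a timelike gradient CKV with $\alpha_2=0$. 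Hence no contradiction with the standing assumption ``$\Pi\neq0$ when $Q=0$'' is obtained, and no use of the propagation/constraint equations will produce $\Pi=0$; criterion 1 is not eliminated by your argument as written. The paper does not prove the reduction you attempt either: it sets the $\alpha_2=0$ case aside by fiat, through the Section \ref{sec4} assertion that a non-perfect-fluid GCS LRS spacetime must have a nonvanishing $n^a$-component of the CKV, together with the Section \ref{sec5} restriction to $\mathcal{A}\neq0$ ``to discard the Robertson--Walker case''; matching the paper therefore means invoking that standing exclusion rather than deriving $\Pi=0$. Your fallback via the stability window \eqref{bound1} does not close the gap: nothing in criterion 1 gives $X_3<0$, and your reading of the window ($X_3<0$ implies $c\notin(0,1)$) sits in tension with the paper's own statement that the bound imposes $X_1,X_3<0$ on a dynamical horizon, so that step would need to be reconciled with \eqref{bound1} before it could be used.
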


The following interpretation of the result of Corollary \ref{cor1} is understood: \textit{The region in which a timelike gradient CKV is defined can admit neither an isolated nor a dynamical horizon.}

Corollary \ref{cor1} says that, for a black hole enclosed by a dynamical horizon, an observer with conformal motion does not `observe' the black hole changing in area. Such a black hole would be in equilibrium as per the observers with conformal motion, even though the $u^a$ - observers do experience the evolution of the black hole. The corollary is seen to be somewhat related to Propositions 6.2 and 6.3 of the reference \cite{ash2}  except that that was the Killing case (see also the similar conclusions drawn in \cite{mm1}).

Also, it is now evident that for a GCS LRS spacetime, the behavior of the mean curvature $\chi$ of the spacelike hypersurfaces along the spatial vector $n^a$ plays a crucial role in determining the existence of a black hole enclosing horizon. In particular, whether $\chi$ is constant along $n^a$ provides some immediate information on the presence of trapped surfaces and the evolution of MOTS. We briefly mention aspects of this case-wise:

\begin{enumerate}

\item $\hat{\chi}=0$: In the case of a non-vanishing acceleration (we consider this non-vanishing $\mathcal{A}$ scenario to discard the Robertson-Walker case), if a leaf of $\mathcal{W}_x$ contains a MOTS, the MOTS will evolve to a null horizon. Trapped surfaces will be enclosed provided that the scalar and conformal (electric) Weyl curvatures satisfy $R<-6\mathcal{E}$. On the other hand, there will be no enclosure of trapped surfaces if the reverse inequality holds.

\item $\hat{\chi}\neq0$: In this case, if a horizon intersects a leaf of $\mathcal{W}_x$, the horizon does not enclose trapped surfaces. This is due to the fact that the leaves of $\mathcal{W}_x$ cannot contain a stable MOTS. Otherwise, if there is no intersection with $\mathcal{W}_x$, independent of the development of the associated black hole, the black hole will be seen as being in equilibrum by conformal observers.

\end{enumerate}

It is quite important to emphasize that as we set $Q=0$ from the onset, $\Pi\neq0$ (as was earlier mentioned) due to Theorem \ref{theo1} (as well as Remark 1.), since $\Pi=0$, a perfect fluid, would imply the spacetime is Robertson-Walker type. Whether or not there is a black hole depends on whether the model is an expanding or collapsing one (the latter which allows for the presence of trapped surfaces), a dichotomy we will not pursue further here. In other words, the scalar $\Pi$ in the GCS case, plays a role in sourcing the dynamics that lead to the formation of trapped surfaces (and consequently the bounding horizon enclosing the trapped sufaces).

\subsection{Nonvanishing $Q$}

Let us briefly comment on the case of a nonvanishing heat flux. When discussing the previous results of this section, we had simply considered the case with a vanishing heat flux to entertain the possibility of the intersection of a black hole horizon with a leaf of $\mathcal{W}_x$. However, $Q$ will generally not be zero. It is then clear that the character of the relationship between $X_j$ and the evolution of the null expansions becomes a bit more intricate. More specifically, there is now a difference in the numerator of $c$ with the explicit form

\begin{eqnarray}
\mathcal{L}_k\theta_k=X_3+X_1+2Q.\label{nvq1}
\end{eqnarray}

For simplicity, we will consider the case where $X_3+X_1=0$ vanishes on the horizon, so that we have a horizon condition compatible with observers with a (gradient) conformal flow. It is quite clear that if $X_3+X_1=0$, then the NEC demands that $Q\leq0$, with the equality case already covered in the previous subsection. Thus, the case of a non-vanishing $Q$ requires, necessarily, that $Q<0$. That is, such a horizon will experience an outward flux. Therefore, unless there is flow of some other matter across the horizon from the `outside', if we assume that the horizon area growth is dictated by $Q$, the horizon area must be decreasing and hence a timelike membrane.

Of course, depending on the dynamics of the interacting fields of the spacetime to the exterior of the black hole, different horizon characters are possible. In any case, for the horizons considered here, foliated by constant $t-r$ MOTS, they will not intersect $\mathcal{W}_x$, and conformal observers will again only see a black hole in equilibrum since there is a non-zero flux. While there is no intersection, we simplify our discussion by allowing for $X_1+X_3$ to vanish on the horizon, as this is also perceived in the case of the second CMC criterion. In this way, the difference experienced between the conformal observers and the $t$-observers are characterized via the flux $Q$. So, for example, for the timelike scenario described above, the conformal observers would not observe the collapse of such a black hole.

%%%%%%%%%%%%%%%%%%%%%%%%%%%%%%%%%%%%%%%%%%%

\section{Discussion}\label{sec6}

%%%%%%%%%%%%%%%%%%%%%%%%%%%%%%%%%%%%%%%%%%%

The existence of gradient conformal Killing vectors plays a special role in spacetime decomposition. In the case that the CKV is timelike, the spacetime splits along the CKV. These spacetimes are the so-called gradient conformally stationary spacetimes. (In the case of LRS spacetimes the associated vorticity of the CKV vanish identically and there the `stationary' can be replaced with `static'. (See comments on this in \cite{as5}).) Their existence in fluid spacetimes has been considered in \cite{nd1}, and more recently \cite{as5} where the class of LRS spacetimes was considered, with the latter reference constructing a wave-like PDE whose solutions are in bijection to gradient CKVs. The approach in \cite{as5} was a direct analysis of the conformal Killing equations, a purely kinematic consideration,written down in covariant form and imposing the vanishing of the conformal bivector. In this work, which in some aspects supplements the results of those carried out in the references \cite{nd1} and \cite{as5}, we have approached this problem more geometrically. We obtained a relationship between components of the Ricci tensor along the CKV, and the gradient of its divergence, which introduces the matter variables in a pair of first order covariant equations for the potential function of the CKV. These equations were analyzed for the existence of a gradient CKV, with the integrability condition being provided. Several interesting observations were made. It was also established that for a perfect fluid LRS spacetime with positive energy density and a nonvanishing pressure, the Robertson-Walker type solution is the unique solution admitting a timelike gradient CKV, thereby providing the unique GCS solution for perfect fluids. In the case of a nonvanishing heat flux $Q$, the equation to be analyzed for the existence of a timelike gradient CKV was obtained.

If a spacetime admits a timelike gradient CKV, the CKV provides a choice of privileged observers, which introduces a foliation of the spacetime by spacelike hypersurfaces in the timelike region. The trace of the extrinsic curvature (or second fundamental form) of these spacelike hypersurfaces, referred to as the mean curvature, is constant with respect to the induced covariant derivative on the hypersurface. Therefore, establishing the constant mean curvature (CMC) conditions provides necessary conditions for the existence of timelike gradient CKVs. We have established the CMC conditions for LRS spacetimes. These are given by three distinct conditions, with a GCS spacetime obeying exactly one of these conditions. The three conditions are all characterized by combinations of a set of three covariant scalars, notated $X_j$ for $j\in\lbrace{1,2,3\rbrace}$, which are comprised of the spacetime curvature variables. In each case, we discussed some of the implications for the spacetime variables. For example, the energy conditions on the spacelike hypersurfaces play a crucial role in determining whether the CMC conditions hold. In some cases, a very delicate interplay between the scalar and electric Weyl curvatures will characterize which one of the CMC conditions hold. The three cases share a common requirement that there is no flux across the hypersurfaces, i.e. on the hypersurfaces $Q=0$. This immediately presents an obstruction result: a conformal Killing observer that observes a non-zero heat flux is not gradient.

Additionally, the CMC condition allows us to strengthen the uniqueness result for perfect fluids, previously obtained, by relaxing the nonvanishing condition on the pressure. This essentially rules out, for example, the Lemaitre-Tolman-Bondi type solution as GCS, or more generally, any inhomogeneous perfect fluid that obeys the the standard energy conditions.

Finally, we discussed some of the implications of the CMC conditions for the existence/presence of black holes and the causal character of associated horizons. It turns out that the particular combinations of the scalars $X_j$, whose vanishing characterize the CMC conditions, have a very close relationship with the evolution of the expansion of outgoing null geodesics from constant $t-r$ surfaces, along outgoing and ingoing null directions (in the absence of heat flux $Q=0$ this relationship is in fact coincident). In this case, depending on the relationship between the $X_j$, we obtained results which describe the presence of trapped surfaces and MOTS in the spacetime region where the CKV is defined, and which bear implications for whether or not a MOTS contained in a leaf of the foliation $\mathcal{W}_x$ can be stable. The various results accumulate in the statement that \textit{if an LRS II spacetime admits a timelike gradient CKV $x^a$, a dynamical horizon in the spacetime cannot intersect a leaf of the induced foliation $\mathcal{W}_x$}. In other words, the timelike region of the CKV can admit no dynamical horizon.

In the case that $Q\neq0$, clearly a horizon will not intersect $\mathcal{W}_x$ as these observers do not experience a nonzero $Q$. We can however assess how the conformal observers interpret the evolution of a black hole with a vanishing condition of the horizon coinciding with one of the CMC conditions. In this case, the null energy condition imposes that there is outgoing flux across the horizon. If we assume that the flux $Q$ is the sole determinant in the area growth of the horizon, then this would imply that the horizon area is decreasing, which in turn implies that the horizon is timelike. However, to the conformal observers, the area of the black hole is non-changing since they do not observe a nonvanishing $Q$. In other words, a collapse of such a black hole will not be observed by these observers. This is a frame problem similar to (but not nearly the same) the switching between Schwarzschild isotropic and Eddington-Finkelstein coordinates. Another parallel is seen in the case of conformal Killing horizons \cite{dy1}, where a dynamical black hole is enclosed by a null hypersurface, as seen by the conformal observers, even though the usual fluid observers with four-velocity will see a non-static black hole (see the references \cite{ab1,tarafdar2023slowly,as6} for discussions of this problem in the case of the Vaidya metric and its charged counterpart). Of course, there are subtleties to be taken care of when evoking such comparisons, but we will not go into detail here.

To conclude, a potential future research direction, which does not directly extend the current work but is nonetheless inextricably linked, is to consider the existence of trapped surfaces and MOTS in the presence of general spacetime symmetries. In particular, without restricting proper conformal symmetries to the gradient case, under which conditions are trapped surfaces and MOTS allowed in a spacetime, or at least in the region where such fields are defined? Also, to what extent do some results for the true Killing case, for example in \cite{mm1,ash2}, extend to the proper conformal Killing case? We will also aim to examine the relationship of such considerations to the stability of MOTS. It would be nice to approach this quite generally rather than restricting to the LRS class of spacetimes. For example, the 1+1+2 decomposition seems quite natural to study this problem since 2-surfaces (possibly distorted) with induced metric $N_{ab}$ have the canonical normal defined by the splitting. All these we will seek to address in a subsequent paper.

\section*{Data availability statement}

No new data were created or analyzed in this study.

\section*{Acknowledgement}

AS research is supported by the Basic Science Research Program through the National Research Foundation of Korea (NRF) funded by the Ministry of Education (grant numbers) (NRF-2022R1I1A1A01053784) and (NRF-2021R1A2C1005748). PKSD acknowledges support through the First Rand Bank, South Africa. AS would also like to thank the IBS Center for Theoretical Physics of the Universe, Daejeon, South Korea, for its hospitality, where a part of this work was carried out.

\section*{Appendix: Covariant LRS field equations}

We write down the covariant field equations in terms of the 1+1+2 variables  (see \cite{cc1,cc2} for further details).

\begin{itemize}

\item \textit{Evolution}

\begin{subequations}
\begin{align}
\frac{2}{3}\dot{\theta}-\dot{\sigma}&=\mathcal{A}\phi-\frac{1}{2}\left(\frac{2}{3}\theta - \sigma\right)^2+2\Omega^2-\frac{1}{3}\left(\rho+3p-2\Lambda\right)+\mathcal{E}-\frac{1}{2}\Pi,\label{evo1}\\
\dot{\phi}&=\left(\frac{2}{3}\theta-\sigma\right)\left(\mathcal{A}-\frac{1}{2}\phi\right)+2\xi\Omega+Q,\label{evo100}\\
\dot{\Omega}&=\mathcal{A}\xi-\left(\frac{2}{3}\theta-\sigma\right)\Omega,\label{aevo1}\\
\dot{\xi}&=-\frac{1}{2}\left(\frac{2}{3}\theta-\sigma\right)\xi+\frac{1}{2}\mathcal{H}+\left(\mathcal{A}-\frac{1}{2}\phi\right)\Omega,\label{aevo2}\\
\dot{\mathcal{E}}-\frac{1}{3}\dot{\rho}+\frac{1}{2}\dot{\Pi}&=-\left(\frac{2}{3}\theta-\sigma\right)\left(\frac{3}{2}\mathcal{E}+\frac{1}{4}\Pi\right)+\frac{1}{2}\phi Q+\frac{1}{2}\left(\rho+p\right)\left(\frac{2}{3}\theta-\sigma\right)+3\xi\mathcal{H},\label{evo101}\\
\dot{\mathcal{H}}&=-3\xi\mathcal{E}-\frac{3}{2}\left(\frac{2}{3}\theta-\sigma\right)\mathcal{H}+\Omega Q+\frac{3}{2}\xi\Pi.\label{aevo3}
\end{align}
\end{subequations}

\item \textit{Propagation}

\begin{subequations}
\begin{align}
\frac{2}{3}\hat{\theta}-\hat{\sigma}&=\frac{3}{2}\phi \sigma+2\xi\Omega +Q,\label{evo2}\\
\hat{\phi}&=\left(\frac{1}{3}\theta+\sigma\right) \left(\frac{2}{3}\theta-\Sigma\right)-\frac{1}{2}\phi^2+2\xi^2-\frac{2}{3}\left(\rho+\Lambda\right)-\mathcal{E}-\frac{1}{2}\Pi,\label{evo200}\\
\hat{\Omega}&=\left(\mathcal{A}-\phi\right)\Omega,\label{aevo4}\\
\hat{\xi}&=-\phi\xi-\left(\frac{1}{3}\theta+\sigma\right)\Omega,\label{aevo5}\\
\hat{\mathcal{E}}-\frac{1}{3}\hat{\rho}+\frac{1}{2}\hat{\Pi}&=-\frac{3}{2}\phi\left(\mathcal{E}+\frac{1}{2}\Pi\right)+3\Omega\mathcal{H}-\frac{1}{2}\left(\frac{2}{3}\theta-\sigma\right)Q,\label{evo201}\\
\hat{\mathcal{H}}&=-\left(3\mathcal{E}+\rho+p-\frac{1}{2}\Pi\right)\Omega-\frac{3}{2}\phi\mathcal{H}-Q\xi.\label{aevo6}
\end{align}
\end{subequations}

\item \textit{Propagation/Evolution}

\begin{subequations}
\begin{align}
\hat{\mathcal{A}}-\dot{\theta}&=-\left(\mathcal{A}+\phi\right)\mathcal{A}+\frac{1}{3}\theta^2+\frac{3}{2}\sigma^2+\frac{1}{2}\left(\rho+3p-2\Lambda\right),\label{evo3}\\
\hat{Q}+\dot{\rho}&=-\theta\left(\rho+p\right)-\left(\phi+2\mathcal{A}\right)Q-\frac{3}{2}\sigma\Pi,\label{evo300}\\
\hat{p}+\hat{\Pi}+\dot{Q}&=-\left(\frac{3}{2}\phi+\mathcal{A}\right)\Pi-\left(\frac{4}{3}\theta+\sigma\right)Q-\left(\rho+p\right)\mathcal{A},\label{evo301}
\end{align}
\end{subequations}
as well as the constraint

\begin{eqnarray}\label{eevvoo}
\mathcal{H}=3\sigma\xi-\left(2\mathcal{A}-\phi\right)\Omega.
\end{eqnarray}
with \(\Lambda\) being the cosmological constant. The scalar \(\mathcal{H}=H_{ab}e^ae^b\) is the magnetic part of the Weyl tensor, which vanishes for the cases that are of interest in the present work.
\end{itemize}


\begin{thebibliography}{99}

%%%%%%%%%%%%%%%%%%%%%%%%%%

\bibitem{br1}
Brinkmann H W 1924
Riemann spaces conformal to Einstein spaces
\textit{Math. Ann.} \textbf{91} 269

\bibitem{br2}
Brinkmann H W 1925
Einstein spaces which are mapped conformally on each other
\textit{Math. Ann.} \textbf{94} 119

\bibitem{ehl1}
Ehlers J and Kundt W 1962
\textit{Gravitation: An introduction to current research} (Wiley),

\bibitem{herr1}
Herrera L, Jimenez J, Ponce de Leon J, Esculpi M and Galina V 1984
Anisotropic fluids and conformal motions in general relativity
\textit{J. Math. Phys.} \textbf{25} 3274

\bibitem{herr2}
Herrera L and Ponce de Leon J 1985
Isotropic spheres admitting a one parameter group of conformal motions
\textit{J. Math. Phys.} \textbf{26} 778

\bibitem{herr3}
Herrera L and Ponce de Leon J 1985
Anisotropic spheres admitting a one parameter group of conformal motions
\textit{J. Math. Phys.} \textbf{26} 2018

\bibitem{mart1}
Maartens R, Mason P S and Tsamparlis M 1986
Kinematic and dynamic properties of conformal Killing vectors in anisotropic fluids
\textit{J. Math. Phys.} \textbf{27} 2987

\bibitem{herr4}
Esculpi M  and Herrera L 1986
Conformally symmetric radiating spheres in general relativity
\textit{J. Math. Phys.} \textbf{27} 2087

\bibitem{aco1}
Coley A A and Tupper B O J 1989
Special conformal Killing vector space-times and symmetry inheritance
\textit{J. Math. Phys.} \textbf{30} 2616

\bibitem{aco2}
Coley A A and Tupper B O J 1990
Spacetimes admitting inheriting conformal Killing vector fields
\textit{Class. Quantum Grav.} \textbf{7} 1961

\bibitem{aco3}
Coley A A and Tupper B O J 1990
Spherically symmetric spacetimes admitting inheriting conformal Killing vector fields
\textit{Class. Quantum Grav.} \textbf{7} 2195

\bibitem{mart2}
Maartens R and Maharaj S D 1995
General solution and classification of conformal motions in static spherical spacetimes
\textit{Class. Quantum Grav.} \textbf{12} 2577

\bibitem{alias1}
Al\'{i}as L J, Romero A and S\'{a}nchez M 1997
Spacelike hypersurfaces of constant mean curvature in certain spacetimes
\textit{Nonlinear Anal.: Theory Methods Appl.} \textbf{30} 655

\bibitem{rub1}
Caballero M, Romero A and Rubio R M 2011
Constant mean curvature spacelike hypersurfaces in Lorentzian manifolds with a timelike gradient conformal vector field
\textit{Class. Quantum Grav.} \textbf{28} 145009

\bibitem{rub2}
de la Fuente D, Rubio R M and Salamanca J J 2017
Stability of maximal hypersurfaces in spacetimes: new general conditions and application to relevant spacetimes
\textit{Gen. Relativ. Gravit.} \textbf{49} 1

\bibitem{rub3}
Romero A, Rubio R M and Salamanca J J 2013
Uniqueness of complete maximal hypersurfaces in spatially parabolic generalized Robertson–Walker spacetimes
\textit{Class. Quantum Grav.} \textbf{30} 115007

\bibitem{rub4}
Romero A, Rubio R M and Salamanca J J 2013
Parabolicity of spacelike hypersurfaces in generalized Robertson-Walker spacetimes. Applications to uniqueness results
\textit{Int. J. Geom. Methods Mod. Phys.} \textbf{10} 1360014

\bibitem{rub5}
Romero A, Rubio R M and Salamanca J J 2014
A new approach for uniqueness of complete maximal hypersurfaces in spatially parabolic GRW spacetimes
\textit{J. Math. Anal. App.} \textbf{419} 355

\bibitem{man1}
Mantica C A, Molinari L G and Suh Y J 2019
Perfect-fluid, generalized Robertson-Walker space-times, and Gray’s decomposition
\textit{J. Math. Phys.} \textbf{60} 052506

\bibitem{man2}
Mantica C A and Molinari L G 2017
Generalized Robertson-Walker spacetimes, a survey
\textit{Int. J. Geom. Methods Mod. Phys.} \textbf{14} 1730001

\bibitem{mm1}
Mars M and J. M. M. Senovilla J M M 2003
Trapped surfaces and symmetries
\textit{Class. Quantum Grav.} \textbf{20} L293

\bibitem{ash2}
Ashtekar A and Galloway G J 2005
Some uniqueness results for dynamical horizons
\textit{Adv. Theor. Math. Phys.} \textbf{9} 1

\bibitem{ib5}
Booth I, Cox G and J. Margalef-Bentabol J 2024
Symmetry and instability of marginally outer trapped surfaces
\textit{Class. Quantum Grav.} \textbf{41} 115003

\bibitem{mm2}
Carrasco A and Mars M 2009
Stability of marginally outer trapped surfaces and symmetries
\textit{Class. Quantum Grav.} \textbf{26} 175002

\bibitem{ib8}
Booth I, Hennigar R A and S. Mondal S 2020
Marginally outer trapped surfaces in the Schwarzschild spacetime: Multiple self-intersections and extreme mass ratio mergers
\textit{Phys. Rev. D} \textbf{102} 044031

\bibitem{el1}
Ellis G F R 1967
Dynamics of pressure‐free matter in general relativity
\textit{J. Math. Phys.} \textbf{8} 1171

\bibitem{el2}
Stewart J M and Ellis G F R 1968
Solutions of Einstein's equations for a fluid which exhibit local rotational symmetry
\textit{J. Math. Phys.} \textbf{9} 1072

\bibitem{cc1}
Clarkson C and Barrett R K 2003
Covariant perturbations of Schwarzschild black holes
\textit{Class. Quantum Grav.} \textbf{20} 3855

\bibitem{cc2}
Clarkson C 2007
Covariant approach for perturbations of rotationally symmetric spacetimes
\textit{Phys. Rev. D} \textbf{76} 104034

\bibitem{bergh1}
Van den Bergh N 2017
Rotating and twisting locally rotationally symmetric imperfect fluids
\textit{Phys. Rev. D} \textbf{96} 104056

\bibitem{singh1}
Singh S, Goswami R and Maharaj S D 2019
Existence of conformal symmetries in locally rotationally symmetric spacetimes: Some covariant results
\textit{J. Math. Phys.} \textbf{60} 052503

\bibitem{chev1}
Hansraj C, Goswami R and Maharaj S D 2020
Semi-tetrad decomposition of spacetime with conformal symmetry
\textit{Gen. Relativ. Gravit.} \textbf{52} 1

\bibitem{chev2}
Hansraj C, Goswami R and Maharaj S D 2023
Geometry of conformally symmetric generalized Vaidya spacetimes
\textit{Int. J. Geom. Methods Mod. Phys.} \textbf{20} 2350115

\bibitem{as5}
Koh S, Sherif A M and Tumurtushaa G 2024
Existence of gradient CKV and gradient conformally stationary LRS spacetimes
\textit{Eur. Phys. J. C} \textbf{84} 69

\bibitem{nd1}
Daftardar V and Dadhich N 1994
Gradient conformal Killing vectors and exact solutions
\textit{Gen. Relativ. Grav.} \textbf{26} 859

\bibitem{amery1}
Amery G and Maharaj S D 2002
Higher order symmetries and the Koutras algorithm
\textit{Int. J. Mod. Phys. D} \textbf{11} 337

\bibitem{koutras1}
Koutras A 1992
Killing tensors from conformal Killing vectors
\textit{Class. Quantum Grav.} \textbf{9} 1573

\bibitem{rani1}
Rani R, Edgar S B and Barnes A 2003
Killing tensors and conformal Killing tensors from conformal Killing vectors
\textit{Class. Quantum Grav.} \textbf{20} 1929

\bibitem{ellis5}
Ellis G. F. R. and Van Elst H 1998
Cosmological models: Cargese lectures
\textit{Theoretical and Observational Cosmology (1999): 1-116}

\bibitem{feng1}
Feng J C 2018
Some globally conserved currents from generalized Killing vectors and scalar test fields
\textit{Phys. Rev. D} \textbf{98} 104035

\bibitem{hamid1}
Ellis G F R, Goswami R, Hamid A I and Maharaj S D 2014
Astrophysical black hole horizons in a cosmological context: Nature and possible consequences on Hawking radiation
\textit{Phys. Rev. D} \textbf{90} 084013

\bibitem{as1}
Sherif A M, Goswami R and Maharaj S D 2019
Some results on cosmological and astrophysical horizons and trapped surfaces
\textit{Class. Quantum Grav.} \textbf{36} 215001

\bibitem{as2}
Sherif A M, Goswami R and Maharaj S D 2020
Marginally trapped surfaces in null normal foliation spacetimes: A one step generalization of LRS II spacetimes
\textit{Int. J. Geom. Meth. Mod. Phys.} \textbf{17} 2050097

\bibitem{as3}
Sherif A M 2021
On the existence of marginally trapped tubes in spacetimes with local rotational symmetry
\textit{Eur. Phys. J. C} \textbf{81} 1

\bibitem{sh1}
Hayward S A 1994
General laws of black-hole dynamics
\textit{Phys. Rev. D} \textbf{49} 6467

\bibitem{ash1}
Ashtekar A and Krishnan B 2004
Isolated and dynamical horizons and their properties
\textit{Living Rev. Relativ.} \textbf{7} 10

\bibitem{ib1}
Booth I 2005
Black-hole boundaries
\textit{Can. J. Phys.} \textbf{83} 1073

\bibitem{ib2}
Booth I and Fairhurst S 2004
The first law for slowly evolving horizons
\textit{Phys. Rev. Lett.} \textbf{92} 011102

\bibitem{ib3}
Booth I, Brits L, Gonzalez J A and Van Den Broeck C 2005
Marginally trapped tubes and dynamical horizons
\textit{Class. Quantum Grav.} \textbf{23} 413

\bibitem{and1}
Andersson L, Mars M and Simon W 2005
Local existence of dynamical and trapping horizons
\textit{Phys. Rev. Lett.} \textbf{95} 111102

\bibitem{as4}
Sherif A M and Dunsby P K S 2023
Horizon area bound and MOTS stability in locally rotationally symmetric solutions
\textit{Class. Quantum Grav.} \textbf{40} 045005

\bibitem{dy1}
Dyer C C and Honig E 1979
Conformal Killing horizons
\textit{J. Math. Phys.} \textbf{20} 409

\bibitem{ab1}
Nielsen A B and Shoom A A 2018
Conformal Killing horizons and their thermodynamics
\textit{Class. Quantum Grav.} \textbf{35} 105008

\bibitem{tarafdar2023slowly}
Tarafdar A and Bhattacharjee S 2023
Slowly evolving horizons in Einstein gravity and beyond
\textit{Class. Quantum Grav.} \textbf{40} 205017

\bibitem{as6}
Koh S, Park M and Sherif A M 2024
Thermodynamics with conformal Killing vector in the charged Vaidya metric
\textit{J. High Energ. Phys.} \textbf{2024} 28

\end{thebibliography}
\end{document}